\theoremstyle{problemStyle}
\newtheorem{theorem}{Theorem}
\newtheorem{lemma}{Lemma}
\newtheorem{definition}{Definition}
\newtheorem{assumption}{Assumption}
\algrenewcommand\algorithmicrequire{\textbf{Input:}}
\algrenewcommand\algorithmicensure{\textbf{Output:}}
\algnewcommand\algorithmicinit{\textbf{Initialization:}}
\algnewcommand\Init{\item[\algorithmicinit]}
\algnewcommand\algorithmicparam{\textbf{Parameters that need to be selected:}}
\algnewcommand\Param{\item[\algorithmicparam]}
\algnewcommand\algorithmicproc{\textbf{Procedure:}}
\algnewcommand\Proc{\item[\algorithmicproc]}
\newcommand{\argmin}[1]{\underset{#1}{\operatorname{arg\,min\,}}}
\newcommand{\argmax}[1]{\underset{#1}{\operatorname{arg\,max\,}}}
\newcommand{\pushright}[1]{\ifmeasuring@#1\else\omit\hfill$\displaystyle#1$\fi\ignorespaces}
\newcommand{\pushleft}[1]{\ifmeasuring@#1\else\omit$\displaystyle#1$\hfill\fi\ignorespaces}
\begin{document}

\title{High Precision TOA-based Direct Localization of Multiple Sources in Multipath}
\author{
	Nil~Garcia*\dag,
	Alexander~M.~Haimovich*,
	Martial~Coulon$\dag$
	and Jason~A.~Dabin** \\
	* CWCSPR, New Jersey Institute of Technology, USA, \texttt{nil.garcia@njit.edu, haimovich@njit.edu}\\
	$\dag$ University of Toulouse, INP-ENSEEIHT/IRIT, France, \texttt{martial.coulon@enseeiht.fr}\\
	** SPAWAR Systems Center Pacific, San Diego (CA), USA 
	\thanks{The work of A.\ M.\ Haimovich was partially supported by the U.S.\ Air Force Office of Scientific Research under agreement No.\ FA9550-12-1-0409.}
	\thanks{This paper was presented in part at 2014 48th Asilomar Conference on Signals, Systems and Computers.}
}

\maketitle

\begin{abstract}

Localization of radio frequency sources over multipath channels is a difficult problem arising in applications such as outdoor or indoor gelocation. Common approaches that combine ad-hoc  methods for multipath mitigation with indirect localization relying on intermediary parameters such as time-of-arrivals, time difference of arrivals or received signal strengths, provide limited performance. This work models the localization of known waveforms over unknown multipath channels in a sparse framework, and develops a direct approach in which multiple sources are localized jointly, directly from observations obtained at distributed sources. The proposed approach exploits channel properties that enable to distinguish line-of-sight (LOS) from non-LOS signal paths. Theoretical guarantees are established for correct recovery of the sources' locations by atomic norm minimization. A second-order cone-based algorithm is developed to produce the optimal atomic decomposition, and it is shown to produce high accuracy location estimates over complex scenes, in which sources are subject to diverse multipath conditions, including lack of LOS.

\end{abstract}

\section{Introduction}

Traditional time-of-arrival (TOA)-based localization is accomplished through a two-step process. In the first step, sensors estimate TOA's from all incoming signals; in the second step, such estimates are transmitted to a central node, that subsequently estimates the location of each source by multilateration \cite{Guvenc09}.  We refer to these localization techniques as \textit{indirect} localization.
In a multipath environment, each sensor receives, in addition to a line-of-sight (LOS) signal, multiple (possibly overlapping) replicas due to non-line-of-sight (NLOS) paths. Due to these multiple arrivals, it is, in general, more challenging to obtain accurate TOA estimates of the LOS components at the sensors. Matched filtering is a method for time delay estimation. However, its performance degrades greatly in the presence of multipath whose delay is of the same order than the inverse of the bandwidth of the signal \cite{Dardari08}.
Moreover, in the case of blockage of the LOS path, the TOA of the first arrival does not correspond to a LOS component anymore, and will corrupt localization. In such a case, it is customary to apply techniques, like the one in \cite{Chen99}, to mitigate NLOS channel biasing of the geolocation estimate.

A better approach than indirect localization is to infer the source locations directly from the signal measurements without estimating any parameters such as propagation delays. 
The concept of direct localization was first introduced by Wax and Kailath \cite{Wax82,Wax83} in the 70's. However, it is in the last decade that  Weiss et al.\ have further investigated and proposed actually efficient techniques \cite{Weiss05,Picard10,Bar-Shalom11,Bialer13}. In the absence of multipath, the state-of-the-art is  Direct Position Determination (DPD) \cite{Weiss05} which outperforms standard indirect localization, particularly at low signal-to-noise ratio (SNR), because it takes into account the fact that signals arriving at different sensors are emitted from the same location. 
The literature on direct localization in the presence of multipath is scarce. In \cite{Papakonstantinou08} a maximum likelihood (ML) estimator has been developed for the location of a single source assuming a fixed and known number of multipath, but without providing an efficient way to compute the estimator. In \cite{Bialer13}, a Direct Positioning Estimation (DPE) technique is proposed for operating in dense multipath environments, but requires knowledge of the power delay profile and is limited to localization of a single source.

A requirement of direct localization is that the signals, or a function of them, are sent to a fusion center which estimates the source's locations. Thus direct techniques are best suited for centralized networks. An example of this are Cloud Radio Access Networks (C-RAN) \cite{ChinaMobile11,Wu12}. C-RAN is a novel architecture for wireless cellular systems whereby the base stations relay the received signals to a central unit which performs all the baseband processing. Cellular systems are required to be location-aware, that is they must be able to estimate the locations of the user equipments (UE) for applications such as security, disaster response, emergency relief and surveillance in GPS-denied environment \cite{Gezici05}. In addition, in the
USA, it is required by the Federal Communications Commission (FCC) that by 2021 the wireless service providers must locate UE's initiating an Emergency 911 (E-911) call with an accuracy of 50 meters for 80\% of all wireless E-911 calls \cite{FCC911}. In uplink localization, the base stations perform time measurements of the received signals emitted by the UE's in order to infer their positions. Thus, a high accuracy direct localization technique designed for multipath channels, such as the one proposed in this work, may enhance the localization accuracy of current existing cellular networks by utilizing the C-RAN infrastructure. Moreover, it exists other applications that may benefit from high accuracy TOA-based geolocation such as in WLAN and WPAN networks. For instance, the setup of \cite{Flammini13} uses radios with the IEEE 802.15 (WPAN) standard to localize devices. The setups proposed in \cite{Golden07,Ciurana07} employ TOA-based localization for localizing 802.11 devices (WLAN). In \cite{Pahlavan06} it is proposed  a hybrid RSS(received signal strength)-TOA based localization algorithm that works with 802.11 and 802.15 technologies. Other TOA-based localization applications are in the radio frequency identification (RFID) field \cite{Arnitz11}.
  
In this paper, we present a TOA-based \textbf{d}irect \textbf{l}ocalization technique for multiple sources in \textbf{m}ultipath environments (DLM) assuming known waveforms and no prior knowledge of the statistics of the NLOS paths. Preliminary results of DLM were presented in \cite{Garcia14}. Without some prior knowledge on the multipath, NLOS components carry no information, and the best performance is obtained by using only LOS components \cite{Gezici05}. We propose an innovative approach, based on ideas of compressive sensing and atomic norm minimization \cite{Chandrasekaran12}, for jointly estimating the sources' locations using as inputs the signals received at all sensors. Numerical evidence shows that DLM has higher accuracy than indirect techniques, and that it works well in a wide range of multipath scenarios, including sensors with blocked LOS. Moreover DLM requires no channel state information.

In Section~\ref{sec:signal}, we introduce the signal model. Section~\ref{sec:DLM} briefly introduces our proposed technique. Sections \ref{sec:1stPhase} and \ref{sec:2ndPhase} provide in-depth explanations on the different parts of our technique. Section~\ref{sec:simulations} compares DLM to previous
existing techniques. Finally, Section~\ref{sec:conclusions} reports our conclusions.

\section{Signal Model} \label{sec:signal}

Consider a network composed of $L$ sensors and $Q$ sources located in a plane. The location of the $q$-th source is defined by two coordinates stacked in a vector $\mathbf{p}_q$. All sources share the same bandwidth $B$, and transmit their own signals $\{s_q(t)\}_{q=1}^Q$. The number of sources $Q$ and their waveforms are known. The observation time is $T$, assumed to be shorter than the time coherence of the channel, therefore, the channel is time-invariant. The complex-valued baseband signal at the $l$-th sensor is
\begin{equation} \label{eq:signal_received}
	r_l(t) = r_l^{\text{LOS}}(t) + r_l^{\text{NLOS}}(t) +w_l(t)\qquad 0\leq t \leq T,
\end{equation}
where $w_l(t)$ is circularly-symmetric complex white Gaussian noise with known variance $\operatorname{E}|w_l(t)|^2 = \sigma_w^2$. The term $r_l^{\text{LOS}}(t)$ is the sum of all LOS components:
\begin{equation} \label{eq:signal_LOS}
	r_l^{\text{LOS}}(t) = \sum_{q=1}^{Q}\alpha_{ql}s_q\left(t-\tau_l(\mathbf{p}_q)\right),
\end{equation}
where $\alpha_{ql}$ is an unknown complex scalar representing the signal strength and phase of the LOS path between the $q$-th source and $l$-th sensor, and $\tau_l(\mathbf{p})$ is the delay of a signal originating at $\mathbf{p}$ and reaching the $l$-th sensor:
\begin{equation} \label{eq:delay_function}
	\tau_l(\mathbf{p})=\left\|\mathbf{p}-\mathbf{p}_l'\right\|_2/c.
\end{equation}
In \eqref{eq:delay_function}, $\mathbf{p}_l'$ is the location of the $l$-th sensor, $c$ is the speed of light and $\|\cdot\|_2$ denotes the standard Euclidean norm. The term $r_l^{\text{NLOS}}(t)$ in \eqref{eq:signal_received} aggregates all NLOS arrivals:
\begin{equation} \label{eq:signal_NLOS}
	r_l^{\text{NLOS}}(t) = \sum_{q=1}^{Q} \sum_{m=1}^{M_{ql}}\alpha_{ql}^{(m)} s_q\left(t-\tau_{ql}^{(m)}\right),
\end{equation}
where $M_{ql}$ denotes the unknown number of NLOS paths between the $q$-th source and the $l$-th sensor, $\alpha_{ql}^{(m)}$ is an unknown complex scalar representing the amplitude of the $m$-th NLOS path between the $q$-th source and $l$-th sensor, and $\tau_{ql}^{(m)}$ is the delay of the NLOS component. The received signal \eqref{eq:signal_received} is sampled at a frequency $f_s$ satisfying the Nyquist sampling criterion: $f_s\geq 2B$, where $B$ is the bandwidth of $r(t)$.  Each sensors collects $N$ time samples at each observation time. By stacking the $N$ acquired samples, the received signal $\mathbf{r}_l = [r_l(0),\ldots,r_l((N-1)/f_s)]^T$ at the $l$-th sensor can be written in the following vector form
\begin{equation} \label{eq:signal_sampled}
	\mathbf{r}_l =
	\sum_{q=1}^{Q}\alpha_{ql}\mathbf{s}_q\left(\tau_l(\mathbf{p}_q)\right)+
	\sum_{q=1}^{Q}\sum_{m=1}^{M_{ql}}\alpha_{ql}^{(m)}\mathbf{s}_q\left(\tau_{ql}^{(m)}\right)
	+\mathbf{w}_l,
\end{equation}
where $\mathbf{s}_q(\tau)$ is the vector of the $N$ received samples from the $q$-th source waveform with delay $\tau$:
\begin{equation} \label{eq:signal_delayed}
	\mathbf{s}_q(\tau)=
	\begin{bmatrix}
		s_q\left(0-\tau\right) &
		\cdots &
		\ s_q\left((N-1)/f_s-\tau\right)
	\end{bmatrix}^T.
\end{equation}
Since all sensors acquire the same number of samples, the samples may be stacked in an $N\times L$ matrix 
\begin{multline} \label{eq:signal_matrix}
	\mathbf{R} =
	\begin{bmatrix}
		\mathbf{r}_1 & \cdots & \mathbf{r}_L
	\end{bmatrix}= \\
	=\sum_{q=1}^{Q}
	\begin{bmatrix}
		\alpha_{q1}\mathbf{s}_q\left(\tau_1(\mathbf{p}_q)\right)
		& \cdots &
		\alpha_{qL}\mathbf{s}_q\left(\tau_L(\mathbf{p}_q)\right)
	\end{bmatrix}
	+\\
	+\sum_{q=1}^{Q}\sum_{l=1}^{L}\sum_{m=1}^{M_{ql}}
	\alpha_{ql}^{(m)}\mathbf{s}_q\left(\tau_{ql}^{(m)}\right)\mathbf{v}_l^T
	+\mathbf{W},
\end{multline}
where the rows and columns index time instants and sensors, respectively, and $\mathbf{v}_l$ is an all-zeros vector except for the $l$-th entry which is one.
The LOS and NLOS components are parametrized by the first and second summands in \eqref{eq:signal_matrix}, respectively. In the rest of the paper, we will switch between the notations in \eqref{eq:signal_sampled} and \eqref{eq:signal_matrix} depending on whether we are interested in the signal of one sensor only or of all sensors.

\section{Proposed Localization Technique} \label{sec:DLM}

In order to develop a localization technique, it is first necessary to understand what parameters of the received signals depend on the sources locations. 
In the signal model introduced in the previous section, the propagation delays of the NLOS components \eqref{eq:signal_NLOS} were assumed to be unknown and arbitrary, because of the lack of prior statistical knowledge of the channel. Thus, information on the sources locations is carried only by the LOS components \eqref{eq:signal_LOS}. This claim is supported by the analysis in \cite{Gezici05}, which showed that the CRB increases when NLOS components are present. Consequently, without a priori knowledge, the optimal strategy is to reject NLOS components as much as possible, and rely on the LOS components to infer the sources' locations.

With indirect techniques, first, the TOA's of the LOS components are estimated, and then used to localize the sources by multilateration. However, indirect techniques are suboptimal because they estimate the TOA of the first path at each sensor independently, instead of taking into account that all LOS components originate from a single source location. In this section, we propose a direct localization technique that relies on the fact that all LOS components associated with a source must originate from the same location. Under the Gaussian assumption, the maximum likelihood estimator (MLE) is the solution to the following fitting problem
\begin{multline} \label{problem:ML}
	\min_{\substack{
	\mathbf{p}_1,\ldots,\mathbf{p}_Q\\
	\alpha_{11},\ldots,\alpha_{LQ}\\
	M_{11},\ldots,M_{LQ}\\
	\tau_{11}^{(1)},\ldots,\tau_{LQ}^{(M_{LQ})}\\
	\alpha_{11}^1,\ldots,\alpha_{LQ}^{M_{LQ}}
	}}
	\sum_{l=1}^{L}\Bigg\|
	\mathbf{r}_l
	-\sum_{q=1}^{Q}\alpha_{ql}\mathbf{s}_q\left(\tau_l(\mathbf{p}_q)\right)
	-\\[-8ex]
	-\sum_{q=1}^{Q}\sum_{m=1}^{M_{ql}}\alpha_{ql}^{(m)}\mathbf{s}_q\left(\tau_{ql}^{(m)}\right)
	\Bigg\|_2^2
\end{multline}
subject to $\tau_{ql}^{(m)}>\tau_l(\mathbf{p}_q)$, for all $q$, $l$ and $m$. The parameters of interest are the source locations $\{\mathbf{p}_q\}_{q=1}^Q$, while the rest act as nuisance parameters. Besides the fact that it is an enormous challenge to find an efficient technique for minimizing this objective function, the ML criterion does not even lead to a satisfactory solution. The reason is that $M_{ql}$, for all $l$ and $q$, are hyperparameters that control the number of NLOS paths in our model. It is known that increasing the values of hyperparameters always leads to a better fitting error \cite{Koller}, and in our case, it would lead to the erroneous conclusion that there are a very large number of NLOS arrivals. Instead, we assume that the number of NLOS arrivals and the number of sources is low with respect to the number of observations. This assumption enables the formulation of a feasible solution to the ML multipath estimation problem by means of a sparse recovery technique.

In order to obtain a high-precision localization technique, there are two properties of the signal paths that need to be exploited. These properties allow to distinguish LOS from NLOS components. The first one is that NLOS components arrive with a longer delay than LOS components, and the second property is that all LOS paths originate from the same location.
Our technique is divided into two stages, which are explained in the following two sections. In the first stage, NLOS components are canceled out from the received signals by exploiting the fact that LOS components must arrive first. This processing can be done locally at each sensor. In the second stage, the cleaned version of the received signals are sent to a fusion center that finds the sources' location. It is in this stage that the source locations are estimated by exploiting the fact that LOS components must originate from the same location, whereas NLOS components may be local to the sensors.

\section{Stage 1: Deconvolution} \label{sec:1stPhase}

In this stage, the multipath channel is deconvolved, or equivalently, the propagation delays of different paths are estimated, and the multipath contributions are removed from the received signals. 
Our technique of choice for deconvolution  is the sparsity-based delay estimation technique proposed by Fuchs \cite{Fuchs99} because of its high accuracy and because it uses only a single snapshot of data as in our case. Other high accuracy time delay estimation methods, like MUSIC \cite{Li04}, are not applicable here because they require multiple uncorrelated data snapshots.
Let $\tau_\text{max}$ be the largest possible propagation delay, then in the Fuchs' technique, the continuous set of all possible propagation delays $[0,\tau_\text{max}]$ is discretized forming a grid of delays
\begin{equation} \label{eq:grid_delays}
	\mathcal{D}=
	\left\{
	0,\tau_{\text{res}},\ldots,\tau_\text{max}
	\right\},
\end{equation}
where parameter $\tau_\text{res}$ denotes the resolution of the grid.
Define the dictionary matrix stacking the received signal waveforms for all possible (discrete) delays \eqref{eq:grid_delays}:
\begin{equation} \label{eq:dictionary_Fuchs}
	\mathbf{A} =
	\begin{bmatrix}
		\mathbf{s}\left(0\right) & \cdots & \mathbf{s}\left(\tau_\text{max}\right)
	\end{bmatrix}.
\end{equation}
Then, the propagation delays of all paths from a single source reaching the $l$-th sensor are estimated by solving the following Lasso problem of the form
\begin{equation} \label{problem:Fuchs}
	\min_{\mathbf{x}} 
	\lambda\left\|\mathbf{x}\right\|_1
	+\left\|\mathbf{r}_l-\mathbf{A}\mathbf{x}\right\|_2^2,
\end{equation}
where $\lambda$ is a regularization parameter, $\|\cdot\|_1$ is the $\ell_1$-norm of a vector, and $\mathbf{r}_l$ is the received signal defined in \eqref{eq:signal_sampled}. 
Solving this convex optimization problems, results in a sparse vector $\hat{\mathbf{x}}$ whose non-zero entries indicate the estimated delays. More precisely, if the $d$-th entry of $\hat{\mathbf{x}}$ is different than zero, then a path has been detected with propagation delay $(d-1)\tau_\text{res}$. After estimating the propagation delays, Fuchs uses a maximum description length (MDL) criterion to filter out false detections. For more details on this technique see \cite{Fuchs99}, and for a better understanding on the mathematics behind the Lasso problem see \cite{Elad}.
In \cite{Fuchs99}, the time delay estimation technique was designed for real-valued signals, and assuming only a single emitting source. Here, we generalize such approach to complex valued signals by simply allowing the variables and parameters in \eqref{problem:Fuchs} to be complex. We also generalize it to multiple sources by expanding the columns of the dictionary \eqref{eq:dictionary_Fuchs} to the waveforms of all sources:
\begin{equation} \label{eq:dictionary_Fuchs_expanded}
	\mathbf{A} =
	\begin{bmatrix}
		\mathbf{s}_1\left(0\right) & \cdots & \mathbf{s}_1\left(\tau_\text{max}\right)
		& \boldsymbol{\cdots} &
		\mathbf{s}_Q\left(0\right) & \cdots & \mathbf{s}_Q\left(\tau_\text{max}\right)
	\end{bmatrix}.
\end{equation}
It is possible to use other delay estimation techniques. Obviously, the more accurate the delay estimation technique, the better performance would be expected from this NLOS interference mitigation. Contrary to indirect localization techniques, the goal here is not to precisely estimate the propagation delays of the first paths, but rather to estimate the propagation delays of all subsequent arrivals, and cancel them out.

Let $\tilde{\tau}^1_{ql},\ldots,\tilde{\tau}^{P_{ql}}_{ql}$ be the estimated propagation delays from source $q$ to sensor $l$, then their amplitudes may be estimated by solving a linear least squares fit
\begin{equation} \label{eq:amplitudes_estimates}
		\left\{\tilde{\alpha}^p_{ql}\right\} = 
		\argmin{\left\{\alpha^p_{ql}\right\}}
		\left\| 
		\mathbf{r}_l - \sum_{q=1}^{Q}\sum_{p=1}^{P_{ql}}
		\alpha^p_{ql}
		\mathbf{s}_q\left(\tilde{\tau}^p_{ql}\right)
		\right\|_2^2.
\end{equation}
Assuming the estimated propagation delays are ordered in ascending order $\tilde{\tau}^1_{ql}<\ldots<\tilde{\tau}_{ql}^{P_{ql}}$, then all arrivals, except the first, can be canceled out from the received signals
\begin{equation} \label{eq:signal_cancelled}
	\tilde{\mathbf{r}}_l =
	\mathbf{r}_l - \sum_{q=1}^{Q}\sum_{p=2}^{P_{ql}}
	\tilde{\alpha}^p_{ql}
	\mathbf{s}_q\left(\tilde{\tau}^p_{ql}\right).
\end{equation}
Ideally, all NLOS arrivals would be perfectly detected and their propagation delays estimated, in which case we could continue with a direct localization technique designed for absent multipath. However, \eqref{eq:signal_cancelled} is not guaranteed to cancel all NLOS components for two reasons. First, if the LOS path between a source and sensor is blocked, then the first arrival corresponds to a NLOS paths, in which case it is not removed. Also, it is possible that the chosen delay estimation technique misses some arrivals or detects some false ones, thus failing to remove some NLOS components or adding some extra components, respectively. In short, this stage is essential as it reduces the multipath, but does not necessarily remove it completely. In the next section, we present a localization technique designed to work in the presence of the residual multipath as well as blocked paths.

\section{Stage 2: Localization} \label{sec:2ndPhase}

This stage seeks to estimate the sources locations using the signals $\{\tilde{\mathbf{r}}_l\}_{l=1}^L$ output by Stage 1.
As explained in the preceding section, such signals include LOS and also NLOS components, therefore, the signal model introduced in \eqref{eq:signal_sampled} for $\mathbf{r}_l$ is also valid for $\tilde{\mathbf{r}}_l$. Obviously, since $\tilde{\mathbf{r}}_l$ and $\mathbf{r}_l$ are different, so are the values of the parameters appearing in \eqref{eq:signal_sampled} that characterize them.
From here on, to keep the notation in check, we abuse the notation by writing $\mathbf{r}_l$ instead of $\tilde{\mathbf{r}}_l$. However, always bear in mind that the observations in this stage are the signals output by Stage 1.

To compute the MLE \eqref{problem:ML}, it is required that the number of LOS and NLOS paths be known, otherwise the minimization \eqref{problem:ML} tends towards a nonsensical solution with a very large number of paths, a problem known as noise overfitting \cite{Koller}.
In this section, it is first assumed that the number of sensors that receive a LOS path from the $q$-th source, say $S_q$, is known. It will be shown later in Section~\ref{sub:unknown_uQ} that such information is not really needed. Nevertheless, even if $\{S_q\}_{q=1}^Q$ are known, but since the number of NLOS paths is not, a pure MLE approach is still not feasible. 
To bypass this issue, we will rely on the fact that the number of sources and NLOS paths is relatively small with respect to the number of observations.

Define a \emph{LOS atom} as the $N\times L$ matrix of measurements of LOS paths of a signal $s_q(t)$ emitted from location $\mathbf{p}$ and received at the $L$ sensors, i.e.,
\begin{equation} \label{eq:atoms_LOS}
	\mathbf{L}_q\left(\mathbf{b},\mathbf{p}\right) =
	\begin{bmatrix}
		b(1) \mathbf{s}_q\left(\tau_1(\mathbf{p})\right)
		& \cdots &
		b(L) \mathbf{s}_q\left(\tau_L(\mathbf{p})\right)
	\end{bmatrix}
\end{equation}
where $\mathbf{b}=[b(1) \cdots b(L)]^T$ are the complex amplitudes of the LOS components. It is important to normalize $\mathbf{b}$ as it will be discussed shortly. Hence, $\|\mathbf{b}\|_2$ is constrained to a given value that we will denote $u_q$, i.e., $\|\mathbf{b}\|_2=u_q$.
Define a \emph{NLOS atom} as the $N\times L$ matrix of measurements due to a single NLOS path from the $q$-th source to the $l$-th sensor
\begin{equation} 
	\mathbf{N}_{ql}\left(\tau\right) =  e^{i\phi} \mathbf{s}_q(\tau)\mathbf{v}_l^T
	\label{eq:atoms_NLOS}
\end{equation}
where the phase and delay are $\phi$ and $\tau$, respectively, and $\mathbf{v}_l$ is a unit vector, with the unit entry indexed by $l$. Note that the dependence of $\mathbf{N}_{ql}(\tau)$ with respect to $\phi$ is omitted in the notation for simplicity reasons. Let $\hat{\mathbf{R}}$ denote the matrix of received signals \eqref{eq:signal_matrix} in the absence of noise.
Then, $\hat{\mathbf{R}}$ may be expressed as a positive linear combination of given atoms
\begin{equation} \label{eq:signal_atomic_decomp_compact}
	\hat{\mathbf{R}} =
	\sum_{k}
	c^{(k)}\mathbf{A}^{(k)},
	\quad\mathbf{A}^{(k)}\in\mathcal{A}
\end{equation}
where $c^{(k)}>0$ for all $k$, and $\mathcal{A}$ is the set of all atoms (or atomic set). The atomic set includes all different LOS and NLOS atoms,
\begin{equation} \label{eq:atomic_set}
	\mathcal{A}= \mathcal{A}_{\text{LOS}} \cup \mathcal{A}_{\text{NLOS}}
\end{equation}
where $\mathcal{A}_{\text{LOS}}$
\begin{equation}
	\mathcal{A}_{\text{LOS}} =
	\bigcup_{q=1}^Q
	\Big\{
		\mathbf{L}_q\left(\mathbf{b},\mathbf{p}\right)
		: \mathbf{b}\in\mathbb{C}^{L}, \mathbf{p}\in\mathcal{S}\subset\mathbb{R}^{2}, \left\|\mathbf{b}\right\|_2=u_q
	\Big\} \label{eq:dictionary_LOS}
\end{equation}
and $\mathcal{A}_{\text{NLOS}}$
\begin{equation}
	\mathcal{A}_{\text{NLOS}} =
	\bigcup_{q=1}^Q\bigcup_{l=1}^L
	\Big\{
		\mathbf{N}_{ql}\left(\tau\right) : 0\leq\phi<2\pi, \tau\in[0,\tau_{\max}]
	\Big\}. \label{eq:dictionary_NLOS}
\end{equation}
Here, $\mathcal{S}$ denotes the search area of the sources and $\tau_{\max}$ the maximum possible delay in the system. Notice that the set of LOS atoms and the set of NLOS atoms are infinite in the sense that $\mathbf{p}$ and $\tau$ are continuous variables within their domains. Thus this framework is inherently different in that the discrete dictionaries used in traditional compressive sensing.

Since the atomic sets are infinite, determining the coefficients $c^{(k)}$ from measurements $\hat{\mathbf{R}}$ is a highly undetermined problem. This problem is solved by seeking a sparse or simple solution in some sense to the coefficients $c^{(k)}$. As motivated in \cite{Chandrasekaran12}, this can be accomplished with the help of the concept of the \emph{atomic norm}.
More precisely, the atomic norm $\|\cdot\|_{\mathcal{A}}$ induced by the set $\mathcal{A}$ is defined as
\begin{equation} \label{eq:atomic_norm}
	\left\|\hat{\mathbf{R}}\right\|_{\mathcal{A}} =
	\inf_{c^{(k)}>0}\left\{
		\sum_{k} c^{(k)} :
		\hat{\mathbf{R}}=
		\sum_{k}c^{(k)} \mathbf{A}^{(k)},\mathbf{A}^{(k)}\in\mathcal{A}
	\right\}.
\end{equation}
An \emph{atomic decomposition of $\hat{\mathbf{R}}$} is any set of coefficients $\{c^{(k)}\}$ for given atoms $\{\mathbf{A}^{(k)}\}$ such that $\hat{\mathbf{R}}=\sum_{k}c^{(k)} \mathbf{A}^{(k)}$. The \emph{cost} of an atomic decomposition is defined as the sum of its positive coefficients: $\sum_{k}c^{(k)}$. An atomic decomposition is \emph{optimal} if its cost achieves $\|\hat{\mathbf{R}}\|_{\mathcal{A}}$, or equivalently, if its cost is the smallest among all atomic decompositions.
Sparsity is imposed here in the sense that we assume that the coefficients $c^{(k)}$ for which the atomic decomposition is optimal (i.e., lowest cost) are associated with the true solution of locations and time delays. This sparsity condition resolves the undetermined nature of \eqref{eq:signal_atomic_decomp_compact}.
In practice, in the presence of noise, we seek the optimal atomic decomposition that \emph{approximately} matches the received signals. Precisely, in \cite{Chandrasekaran12} it is suggested that the noiseless signals $\hat{\mathbf{R}}$ may be estimated by minimizing
\begin{subequations} \label{problem:atomic}
	\begin{align}
		\min_{\hat{\mathbf{R}}} &\;\;
			\left\|\hat{\mathbf{R}}\right\|_{\mathcal{A}} \label{problem:atomic_objective} \\
		\text{s.t.} &\;\;
			\left\|\mathbf{R}-\hat{\mathbf{R}}\right\|^2_F\leq\epsilon, \label{problem:atomic_error}
	\end{align}
\end{subequations}
where $\|\cdot\|_F$ is the Frobenius norm, i.e., $\|\mathbf{M}\|_f = \sqrt{\sum_{i,j}|{M}(i,j)|^2}$, for any matrix $\mathbf{M}$, where ${M}(i,j)$ is the entry at the $i$-th row and $j$-th column.
Roughly speaking, minimizing the atomic norm \eqref{problem:atomic_objective} enforces sparsity, while constraint \eqref{problem:atomic_error} sets a bound on the mismatch between the noisy signals and the estimated signals. In fact, the left hand side of \eqref{problem:atomic_error} is an ML-like cost function \eqref{problem:ML}, hence, parameter $\epsilon$ may be regarded as an educated guess of the ML cost. The optimum solution to problem \eqref{problem:atomic}, say $\hat{\mathbf{R}}^\star$, may be regarded as an estimate of the received signals in the absence of noise.
However, notice that solving such problem produces $\hat{\mathbf{R}}^\star$ only and not its optimal atomic decomposition.
Thus, in general, in order to recover the optimal atomic decomposition, first, the optimum $\hat{\mathbf{R}}^\star$ to problem  \eqref{problem:atomic} is computed, and second, the optimal atomic decomposition of $\hat{\mathbf{R}}^\star$ is found.

The atomic decomposition of $\hat{\mathbf{R}}^\star$ may be expressed
\begin{equation} \label{eq:atomic_decomposition}
	\hat{\mathbf{R}}^\star = \sum_{q=1}^{Q}
	\sum_{k=1}^{K_q}
	c_{q}^{(k)}\mathbf{L}_q\left(\mathbf{b}_q^{(k)},\mathbf{p}_q^{(k)}\right)
	+
	\sum_{q=1}^{Q}\sum_{l=1}^{L}\sum_{k=1}^{K_{ql}}
	c_{ql}^{(k)}\mathbf{N}_{ql}\left(\tau_{ql}^{(k)}\right)
\end{equation}
where $\{c_{q}^{(k)}\}_{k=1}^{K_q}$ are the positive coefficients associated to the $K_q$ \emph{non-zero} LOS atoms from the $q$-th source, and $\{c_{ql}^{(k)}\}_{k=1}^{K_{ql}}$ are the positive coefficients associated to the $K_{ql}$ \emph{non-zero} NLOS atoms between the source-sensor pair $(q,l)$.
Given $\mathbf{\hat{R}}^\star$ is expressed as in \eqref{eq:atomic_decomposition} and given that \eqref{eq:atomic_decomposition} is an optimal atomic decomposition, i.e.,  its cost
\begin{equation} \label{eq:cost}
	C=\sum_{q=1}^{Q}\sum_{k=1}^{K_q} c_{q}^{(k)} +\sum_{q=1}^{Q}\sum_{l=1}^{L}\sum_{k=1}^{K_{ql}} c_{ql}^{(k)}
\end{equation}
is the smallest, then the set of locations for the $q$-th source associated with the optimal atomic decomposition is
\begin{equation} \label{eq:optimal_locations}
	\left\{\mathbf{p}_q^{(k)} \text{ for all }  k=1,\ldots,K_q\right\},
\end{equation}
the set of LOS propagation delays between source $q$ and sensor $l$ is
\begin{equation} \label{eq:optimal_LOS_delays}
	\left\{\tau_{l}\left(\mathbf{p}_{q}^{(k)}\right)
	: b_q^{(k)}(l)\neq0 \text{, for all }   k=1,\ldots,K_{q}\right\},
\end{equation}
and the set of NLOS propagation delays between source $q$ and sensor $l$ is
\begin{equation} \label{eq:optimal_NLOS_delays}
	\left\{\tau_{ql}^{(k)} \text{ for all }  k=1,\ldots,K_{ql}\right\}.
\end{equation}
Next, a definition of correct recovery is provided.

\begin{definition} \label{def:correct_recovery}
	Given $\mathbf{\hat{R}}^\star$ is expressed as in \eqref{eq:atomic_decomposition} and given that \eqref{eq:atomic_decomposition} is an optimal atomic decomposition, then the sources locations are correctly recovered if
	\begin{align}
		K_q &= 1 \\
		\mathbf{p}_q^{(1)} &= \mathbf{p}_q,
	\end{align}
	for $q=1,\ldots,Q$.
\end{definition}
Condition $K_q=1$ is required for all $q$ because, obviously, it exists only one valid location for each source, and in such case $\mathbf{p}_q^{(1)}$ must match the true location of the $q$-th source.

In Table~\ref{fig:atomic_process}, the procedure for recovering the sources' locations from the received signals is summarized.
\begin{figure}
	\centering
	\includegraphics[width=\columnwidth]{./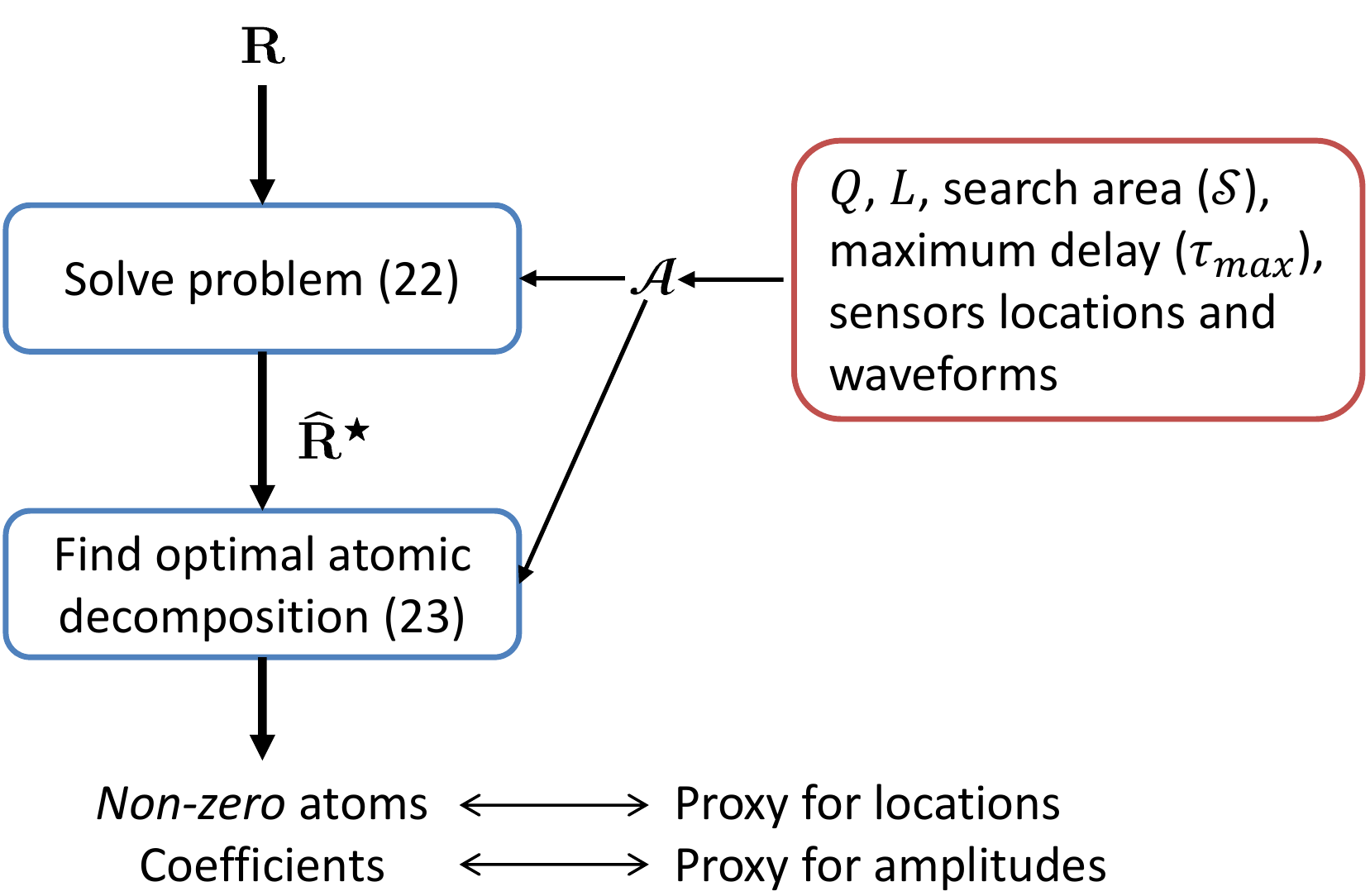}
	\caption{Flow diagram of the process for recovering the sources' locations.}
	\label{fig:atomic_process}
\end{figure}
In some specific cases, such as estimating frequencies from a mixture of complex sinusoids \cite{Tang13-2}, some sophisticated techniques have been devised for minimizing the atomic norm, and then recover the optimal atomic decomposition of $\hat{\mathbf{R}}^\star$, thanks to the particular structure of the atomic set. However, in general, it is challenging  to solve \eqref{problem:atomic}, because computing the atomic norm is not always straightforward. In Section \ref{sub:discretization}, an approximate method based on discretizing the atomic set is proposed for simultaneously solving the atomic norm minimization problem \eqref{problem:atomic} and recovering the optimal atomic decomposition \eqref{eq:atomic_decomposition}.
Before delving into the details on how to actually solve problem \eqref{problem:atomic} and find the optimal atomic decomposition as expressed in \eqref{eq:atomic_decomposition}, it is shown in the next section that tuning the parameters $\{u_q\}_{q=1}^Q$ appropriately is critical to the correct recovery of the sources' locations.

\subsection{Guarantee for Correct Recovery of the Sources' Locations} \label{sub:detection_LOS/NLOS}

In this part are developed guarantees for correct recovery of the sources' location in the sense of Definition~\ref{def:correct_recovery}. To ensure an identifiable signal model, the following assumption is made:
\begin{assumption} \label{ass:delays_identifiability}
	For each sensor, signal model \eqref{eq:signal_sampled} is identifiable in the sense that the observed data is explained by a unique set of delays, $\tau_l(\mathbf{p})$ for $q=1,\ldots,Q$, and $\tau_{ql}^{(m)}$ for $q=1,\ldots,Q$ and $m=1,\ldots,M_{ql}$. 
\end{assumption}

Further, to develop correct recovery guarantees, we assume noiseless observations, in which case the solution to \eqref{problem:atomic} is trivially $\hat{\mathbf{R}}^\star=\mathbf{R}$. However, as shown in the numerical section, the theoretical results obtained in this section are also meaningful in the presence of noise. The key properties that are exploited to obtain guarantees are:
\begin{enumerate}
	\item LOS signal paths associated with a source have a common location (see \eqref{eq:signal_LOS}).
	\item NLOS signal paths are local to sensors (see \eqref{eq:signal_NLOS}).
\end{enumerate}

To formalize the notion that LOS paths emitted by a source have a common location, we introduce the notion of location consistency:

\begin{definition} \label{def:consistency}
	A location $\mathbf{p}$ is said to be consistent with $X$ paths (LOS or NLOS), or vice-versa, if the propagation delays of such paths, say $\tau_1,\ldots,\tau_X$, satisfy
	\begin{equation} \label{eq:consistency}
		\tau_x = \tau_{l_x}\left(\mathbf{p}\right) \quad\text{for }x=1,\ldots,X,
	\end{equation}
	where $\{l_1,\ldots,l_X\}\subseteq\{1,\ldots,L\}$ are the indexes of the destination sensors of the $X$ paths, and $\tau_{l_x}(\mathbf{p})$ is the delay of the direct path between location $\mathbf{p}$ and sensor $l_x$. 
\end{definition}

In order to find the sources' locations exploiting the notion of consistency in Definition~\ref{def:consistency}, the following assumptions are made.

\begin{assumption} \label{ass:Sq_known}
	The number of LOS paths from source $q$, $S_q$, is known.
\end{assumption}

By its very nature, a source location cannot be consistent with any NLOS. Thus the location of the $q$-th source is consistent with exactly $S_q$ paths.

\begin{assumption} \label{ass:source_identifiability}
	Only the true location of the $q$-th source is consistent with $S_q$ paths emitted by the $q$-th source.
\end{assumption}


By Assumptions \ref{ass:Sq_known} and \ref{ass:source_identifiability}, given a source with a known emitted waveform and a known number $S$ of LOS paths, its true location is the only one consistent with $S$ paths.

From \eqref{eq:atomic_decomposition} and Definition~\ref{def:correct_recovery}, the solution containing the true locations of the sources is associated with the optimal atomic decomposition. However, from \eqref{eq:atomic_set} and the definition of atoms, namely, LOS atoms \eqref{eq:atoms_LOS} and NLOS atoms \eqref{eq:atoms_NLOS}, the optimal atomic decomposition is parameterized by the norm of the amplitudes in the LOS atoms $u_q$ \eqref{eq:atoms_LOS}. For given data $\hat{\mathbf{R}}^\star$, decreasing $u_q$ has to be balanced by an increase in the coefficients of the LOS atoms, thus raising their contribution to the cost $C$ \eqref{eq:cost}. Put another way, different values of $u_q$ lead to different explanations of the data $\hat{\mathbf{R}}^\star$ manifested as different optimal atomic decompositions, and thus corresponding to different solutions of the source localization problem. We seek to determine which values of parameters $u_q$ ensure that the corresponding optimal atomic decomposition results in locations that are consistent with the number of paths indicated by Assumption~\ref{ass:Sq_known}. This in turn guarantees that these are the true sources' locations. The next lemma establishes the condition on $u_q$ under which a location associated with the optimal atomic decomposition is also consistent with the number of LOS paths.

\begin{lemma} \label{lem:right_inequality}
	Given a known number of LOS paths $S_q$ of the $q$-th source, if parameter $u_q$ satisfies
	\begin{equation} \label{eq:right_inequality}
		u_q<\frac{1}{\sqrt{S_q-1}},
	\end{equation}
	then any location (for the $q$-th source) associated to the optimal atomic decomposition \eqref{eq:optimal_locations} is consistent, in the sense of Definition~\ref{def:consistency}, with $S_q$ or more paths.
\end{lemma}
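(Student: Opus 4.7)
The plan is to combine a simple replacement argument with Cauchy--Schwarz, then link the support size of the LOS amplitude vector to the number of consistent paths. Fix any LOS atom $c_q^{(k)}\mathbf{L}_q(\mathbf{b}_q^{(k)},\mathbf{p}_q^{(k)})$ appearing in the optimal atomic decomposition \eqref{eq:atomic_decomposition}; by construction $\|\mathbf{b}_q^{(k)}\|_2=u_q$ and $c_q^{(k)}>0$.

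The first observation is the pointwise identity $c_q^{(k)}\mathbf{L}_q(\mathbf{b}_q^{(k)},\mathbf{p}_q^{(k)})=\sum_{l=1}^{L} c_q^{(k)}|b_q^{(k)}(l)|\,\mathbf{N}_{ql}(\tau_l(\mathbf{p}_q^{(k)}))$, obtained by absorbing the phase of each $b_q^{(k)}(l)$ into the phase of the corresponding NLOS atom. Substituting this equivalent form into \eqref{eq:atomic_decomposition} produces another valid atomic decomposition of $\hat{\mathbf{R}}^\star$ whose total cost differs from the original by $c_q^{(k)}(\|\mathbf{b}_q^{(k)}\|_1-1)$. Optimality of the starting decomposition therefore forces $\|\mathbf{b}_q^{(k)}\|_1\geq 1$.

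Letting $S$ denote the number of nonzero entries of $\mathbf{b}_q^{(k)}$, Cauchy--Schwarz then yields $1\leq\|\mathbf{b}_q^{(k)}\|_1\leq\sqrt{S}\,\|\mathbf{b}_q^{(k)}\|_2=u_q\sqrt{S}$, so $S\geq 1/u_q^2$. The hypothesis $u_q<1/\sqrt{S_q-1}$ sharpens this to $S>S_q-1$, hence $S\geq S_q$.

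The last step is to promote this support bound into a bound on the number of paths $\mathbf{p}_q^{(k)}$ is consistent with in the sense of Definition~\ref{def:consistency}. I would argue that in the optimal decomposition, every sensor $l$ with $b_q^{(k)}(l)\neq 0$ must have $\tau_l(\mathbf{p}_q^{(k)})$ coinciding with an actual delay in the data at sensor $l$: by Assumption~\ref{ass:delays_identifiability}, any contribution at a delay absent from the data would have to be cancelled by other atoms, and a local exchange that removes that contribution together with its canceller (while rescaling the LOS amplitudes to preserve $\|\mathbf{b}\|_2=u_q$) strictly lowers the cost, contradicting optimality. Each of the $S\geq S_q$ active sensors therefore supplies a distinct path consistent with $\mathbf{p}_q^{(k)}$. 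The main obstacle lies precisely in this final step, since cancellation may couple the LOS atom with several other atoms (LOS or NLOS) simultaneously, and one must carefully design the exchange to verify both a strict cost decrease and preservation of $\sum_{k}c^{(k)}\mathbf{A}^{(k)}=\hat{\mathbf{R}}^\star$.
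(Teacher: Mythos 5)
Your proof is correct and takes essentially the same route as the paper's: the identity rewriting a LOS atom as a weighted sum of NLOS atoms, the cost comparison forcing $\|\mathbf{b}_q^{(k)}\|_1\geq 1$ under optimality, and Cauchy--Schwarz combined with $\|\mathbf{b}_q^{(k)}\|_2=u_q<1/\sqrt{S_q-1}$ to conclude $\|\mathbf{b}_q^{(k)}\|_0\geq S_q$ are exactly the paper's steps (arranged there as a contradiction rather than a direct chain). The final step you flag as the main obstacle is dispatched in the paper by directly invoking the identifiability assumptions (Assumptions~\ref{ass:delays_identifiability} and~\ref{ass:source_identifiability}): a nonzero $b_q^{(k)}(l)$ contributes the delay $\tau_l(\mathbf{p}_q^{(k)})$ to the representation of $\mathbf{r}_l$, and uniqueness of the explaining set of delays is taken to rule out the cancellation scenario you describe, so no exchange argument is attempted there.
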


For the proof of Lemma~\ref{lem:right_inequality}, see Appendix \ref{appen:right_inequality}. The interpretation of this lemma is that given a solution that produces a location with less than $S_q$ paths, and if condition \eqref{eq:right_inequality} is met, there exists another lower cost solution, implying that a solution with fewer than $S_q$ paths cannot be optimal.

The previous lemma guarantees that any location associated with the optimal atomic decomposition is consistent with $S_q$ paths.
The next lemma establishes the condition on $u_q$ that ensures that at least one location is associated with the optimal atomic decomposition.

\begin{lemma} \label{lem:left_inequality}
	Given a known number of LOS paths $S_q$ of the $q$-th source,
	if parameter $u_q$ satisfies
	\begin{equation} \label{eq:left_inequality}
		u_q > \frac{1}{\sqrt{S_q}},
	\end{equation}
	then at least one location (for the $q$-th source) is associated to the optimal atomic decomposition.
\end{lemma}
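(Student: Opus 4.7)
The plan is to argue by contradiction via a first-order perturbation. Suppose some optimal atomic decomposition of $\hat{\mathbf{R}}^\star$ of the form \eqref{eq:atomic_decomposition} has $K_q=0$ for source $q$; I will exhibit an infinitesimal modification that strictly decreases the cost $C$ of \eqref{eq:cost}, contradicting optimality.

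First I would pin down what the decomposition must look like under the contradiction hypothesis. In the noiseless regime $\hat{\mathbf{R}}^\star=\mathbf{R}$, and by Assumption~\ref{ass:delays_identifiability} the delay set at each sensor is uniquely determined by the data. Let $\mathcal{L}_q\subseteq\{1,\ldots,L\}$ be the $S_q$-element index set of sensors reached by a LOS path from source $q$. If no LOS atom for source $q$ appears, then at each $l\in\mathcal{L}_q$ the LOS arrival---of delay $\tau_l(\mathbf{p}_q)$ and amplitude $\alpha_{ql}$---must be reproduced entirely by NLOS atoms $\mathbf{N}_{ql}(\tau_l(\mathbf{p}_q))$. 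A short convexity argument (two positive-coefficient atoms sharing a triple $(q,l,\tau)$ collapse into a single one whose coefficient is at most their sum) shows that WLOG exactly one NLOS atom with coefficient $|\alpha_{ql}|$ and phase $\arg(\alpha_{ql})$ sits in each such slot, contributing $\sum_{l\in\mathcal{L}_q}|\alpha_{ql}|$ to $C$.

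Next I would introduce the perturbation: add a LOS atom $\epsilon\,\mathbf{L}_q(\mathbf{b},\mathbf{p}_q)$ with $\epsilon>0$, $\|\mathbf{b}\|_2=u_q$, and $b(l)=0$ for $l\notin\mathcal{L}_q$, while simultaneously retuning each of the $S_q$ matching NLOS atoms to coefficient $|\alpha_{ql}-\epsilon\,b(l)|$ and phase $\arg(\alpha_{ql}-\epsilon\,b(l))$, so that the data is still exactly reproduced. Taylor-expanding around $\epsilon=0$ with the convention $\operatorname{sgn}(\alpha):=\alpha/|\alpha|$ yields
\[
\Delta C(\epsilon)=\epsilon\Bigl(1-\operatorname{Re}\!\!\sum_{l\in\mathcal{L}_q}\overline{\operatorname{sgn}(\alpha_{ql})}\,b(l)\Bigr)+O(\epsilon^2).
\]
Choosing $b(l)=(u_q/\sqrt{S_q})\,\operatorname{sgn}(\alpha_{ql})$ for $l\in\mathcal{L}_q$ satisfies $\|\mathbf{b}\|_2=u_q$ and saturates Cauchy--Schwarz at $u_q\sqrt{S_q}$, so $\Delta C(\epsilon)=\epsilon(1-u_q\sqrt{S_q})+O(\epsilon^2)$, which is strictly negative for sufficiently small $\epsilon>0$ precisely when $u_q>1/\sqrt{S_q}$. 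This contradicts optimality and forces $K_q\geq 1$.

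The main obstacle I anticipate is the bookkeeping behind the first step: justifying that, under the no-LOS-atom hypothesis, the contribution at $(q,l,\tau_l(\mathbf{p}_q))$ isolates cleanly as a single NLOS atom with coefficient exactly $|\alpha_{ql}|$. One must use Assumption~\ref{ass:delays_identifiability} to rule out that LOS atoms for source $q$ at other locations on the isodelay locus $\tau_l(\mathbf{p})=\tau_l(\mathbf{p}_q)$ sneak contributions into that slot, and then apply triangle-inequality-style collapsing to arrive at a canonical form. Once this clean separation is secured, the Cauchy--Schwarz step delivers the threshold $1/\sqrt{S_q}$ tightly.
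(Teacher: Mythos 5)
Your proof is correct and follows essentially the same route as the paper's: both argue by contradiction that if no LOS atom appears for source $q$, the $S_q$ LOS arrivals must be carried by NLOS atoms at the delays $\tau_l(\mathbf{p}_q)$, and both trade a portion of those atoms for a LOS atom with equal-magnitude, phase-matched entries supported on the LOS sensors, showing the cost strictly decreases exactly when $u_q>1/\sqrt{S_q}$. The only difference is cosmetic: you perform an infinitesimal exchange and differentiate the cost at $\epsilon=0$, whereas the paper performs a finite exchange of size $c_{\min}$ and reads the same threshold off the resulting inequality.
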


For the proof of Lemma~\ref{lem:left_inequality}, see Appendix \ref{appen:left_inequality}. The interpretation of this lemma is that given a solution that does not produce a location for the $q$-th source, and if condition \eqref{eq:left_inequality} is met, there exists another lower cost solution that produces a location for the $q$-th source.

The two lemmas lead directly to the following theorem establishing the guarantee for correct recovery of the sources' locations.

\begin{theorem} \label{thm:u_q}
	A sufficient condition for the correct recovery in the sense of Definition~\ref{def:correct_recovery} of the sources' locations is that
	\begin{equation} \label{eq:u_q}
		\frac{1}{\sqrt{S_q}} < u_q < \frac{1}{\sqrt{S_q-1}}
	\end{equation}
	for all $q$.
\end{theorem}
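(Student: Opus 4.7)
The plan is to combine Lemma~\ref{lem:left_inequality} and Lemma~\ref{lem:right_inequality} and then apply Assumption~\ref{ass:source_identifiability} to identify the recovered locations with the true source positions. The result is essentially a two-sided pinch: the lower bound on $u_q$ forces at least one LOS atom per source into the optimal decomposition, while the upper bound forces each such atom to live at a location that is ``fully consistent'' with the $S_q$ LOS paths.

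First, I would invoke the lower bound $u_q > 1/\sqrt{S_q}$ and Lemma~\ref{lem:left_inequality} to guarantee that for each source $q$ at least one LOS atom appears in the optimal atomic decomposition \eqref{eq:atomic_decomposition}, so that $K_q \ge 1$ and at least one candidate $\mathbf{p}_q^{(k)}$ exists.

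Next, I would use the upper bound $u_q < 1/\sqrt{S_q-1}$ and Lemma~\ref{lem:right_inequality} to conclude that every such location $\mathbf{p}_q^{(k)}$ is consistent, in the sense of Definition~\ref{def:consistency}, with at least $S_q$ paths from source $q$. Since source $q$ emits exactly $S_q$ LOS paths, and since NLOS paths (being strictly delayed relative to the direct path and local to sensors) cannot match a direct-path delay from a candidate source location, the only way a candidate location can be consistent with $S_q$ paths is by matching the $S_q$ LOS delays themselves. Assumption~\ref{ass:source_identifiability} then pins every $\mathbf{p}_q^{(k)}$ down to the true source location, yielding $\mathbf{p}_q^{(k)} = \mathbf{p}_q$ for all $k$.

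Finally, I would establish $K_q = 1$ by a short merging argument: any two LOS atoms of source $q$ sharing the common location $\mathbf{p}_q$ with amplitude vectors $\mathbf{b}_q^{(k)}$ of norm $u_q$ can be combined by linearity of $\mathbf{L}_q(\cdot,\mathbf{p}_q)$ in $\mathbf{b}$ and the triangle inequality into a single LOS atom at $\mathbf{p}_q$ of equal or smaller cost; optimality of \eqref{eq:atomic_decomposition} therefore forces $K_q = 1$. Combined with $\mathbf{p}_q^{(1)} = \mathbf{p}_q$, this is exactly the correct recovery criterion of Definition~\ref{def:correct_recovery}. The main delicate point will be the second step, which implicitly leans on interpreting Assumption~\ref{ass:source_identifiability} broadly enough to exclude coincidental matches at non-source locations (in particular matches involving NLOS delays from other sources); once that interpretation is granted, the rest is routine bookkeeping built on the two lemmas.
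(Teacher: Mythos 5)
Your proposal is correct and follows essentially the same route as the paper: Lemma~\ref{lem:left_inequality} for existence of at least one location, Lemma~\ref{lem:right_inequality} for consistency with $S_q$ or more paths, and Assumption~\ref{ass:source_identifiability} to pin each recovered location to the true $\mathbf{p}_q$. Your explicit merging argument establishing $K_q=1$ (combining two LOS atoms at the same location via linearity in $\mathbf{b}$ and the triangle inequality) is a welcome extra step of rigor that the paper's own proof leaves implicit, but it does not change the overall approach.
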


\begin{proof}
	If $u_q>\nicefrac{1}{\sqrt{S_q}}$ for all $q$, by Lemma~\ref{lem:left_inequality}, at least one location is associated to the optimal atomic decomposition for each source.
	By Assumption~\ref{ass:Sq_known}, the number of LOS paths $S_q$ is known for each source $q$. Therefore, if $u_q$ is chosen such that $u_q<\nicefrac{1}{\sqrt{S_q-1}}$ for all $q$, then by Lemma~\ref{lem:right_inequality}, the locations associated to the optimal atomic decomposition for the source $q$ are consistent with $S_q$ or more paths. However, according to Assumption~\ref{ass:source_identifiability}, only the location of the source is consistent with $S_q$ or more paths, thus completing the proof.
\end{proof}

The interpretation of Theorem~\ref{thm:u_q} is that the LOS atoms should be large enough to guarantee at least one LOS solution, but small enough to guarantee that the solution is the correct one.
A numerical examples illustrates Theorem~\ref{thm:u_q}. Let the search area be of size \SI{200x200}{\metre} and centered around the origin of the coordinate system. A single source is positioned at (\SI{20}{\meter}, \SI{30}{\metre}) and 5 sensors are positioned at coordinates (\SI{40}{\meter}, \SI{-40}{\metre}), (\SI{-40}{\meter}, \SI{-40}{\metre}), (\SI{-40}{\meter}, \SI{40}{\metre}), (\SI{40}{\meter}, \SI{40}{\metre}) and (\SI{0}{\meter}, \SI{0}{\metre}). All sensors receive a LOS path except for the sensor located at (\SI{40}{\meter}, \SI{-40}{\metre}). Therefore, the number of LOS paths is $S_1=4$. In addition, the sensor at the origin receives a NLOS path whose path length is \SI{91}{\meter}.
The goal is to compute the probability of correct recovery in the sense of Definition~\ref{def:correct_recovery} as a function of $u_1$ and under the conditions of Theorem~\ref{thm:u_q}, i.e., the noiseless case. 
The implementation of the procedure leading to Fig.~\ref{fig:correct_recovery} is discussed in Section~\ref{sub:discretization}.
To estimate the probability of correct recovery, the experiment is repeated 1000 times, and in each experiment the emitted waveform as well as the amplitudes of the LOS and NLOS paths are chosen randomly. The exact model for generating the waveforms, as well as other parameters is the same as the one detailed in Section~\ref{sec:simulations}. 
\begin{figure}
	\centering
	\includegraphics[width=\columnwidth]{./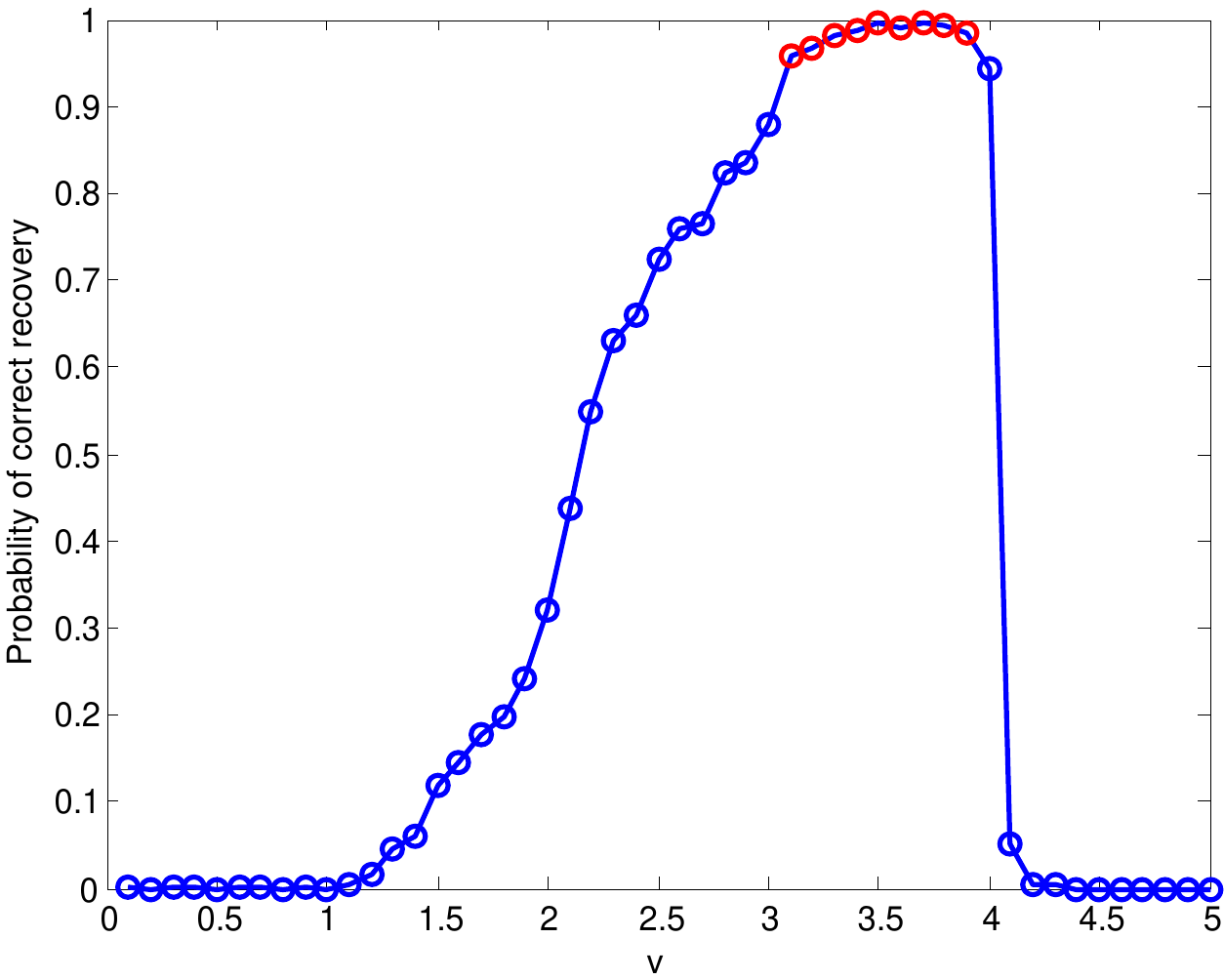}
	\caption{Probability of correct recovery in the sense of Definition~\ref{def:correct_recovery} in the absence of noise.}
	\label{fig:correct_recovery}
\end{figure}
Figure~\ref{fig:correct_recovery} plots the probability of correct recovery versus parameter $v$ which is defined as $v=(\nicefrac{1}{u_1})^2$. Theorem~\ref{thm:u_q} guarantees a correct solution if 
\begin{equation} \label{eq:v}
	S_1-1 < v < S_1.
\end{equation}
As it can be seen in Fig.~\ref{fig:correct_recovery}, for values of $v$ within the interval $]3,4[$, the probability of correct recovery is close to one, whereas it is smaller for other values.

\subsection{Practical Implementation: Discretization of the Atomic Set} \label{sub:discretization}

Remind the reader that in general, when noise is present the process for recovering the sources locations follows Fig.~\ref{fig:atomic_process}. The most straightforward method for solving problem \eqref{problem:atomic} and obtain its optimal atomic decomposition \eqref{eq:atomic_decomposition} is to substitute the atomic norm in the objective function \eqref{problem:atomic_objective} by its definition \eqref{eq:atomic_norm}, and optimize over the set of positive coefficients $\{c^{(k)}\}$.
However, such approach yields an infinite-dimensional problem because the number of atoms is infinite.  Except for some particular cases, like recovering frequencies of mixtures of sinusoids \cite{Tang13-2,Candes14}, it is in general very challenging to optimize infinite-dimensional convex problems. In \cite{Tang13}, it is advocated that dictionaries whose atoms depend on continuous parameters are discretized. For instance, the NLOS atoms \eqref{eq:atoms_NLOS} depend on a delay, which is by definition within the interval $[0,\tau_\text{max}]$, and can be discretized into a grid of discrete delays such as \eqref{eq:grid_delays}. 
In \cite{Tang13}, it is proven that the optimization problem based on the discretized atomic set converges to the original problem \eqref{problem:atomic} as the grid finesse increases. Indeed, grid refinement approaches can be found in some signal processing applications such as delay estimation \cite{Fuchs99}, direction-of-arrival estimation \cite{Malioutov05,Mahata10,Hu12} or direct localization of sources \cite{Picard10}.

The atomic set is composed of LOS \eqref{eq:atoms_LOS} and NLOS atoms \eqref{eq:atoms_NLOS}. The LOS atoms are parametrized by the location of the source whereas the NLOS atoms are parametrized by their propagation delays. Therefore, two different types of grids need to be created: one grid of locations and one grid of delays. The propagation delays of the NLOS paths vary between 0 and $\tau_\text{max}$. Upon discretizing the interval of delays with a resolution of $\tau_\text{res}$
\begin{equation} \label{eq:grid_delays_2}
	\mathcal{D}=
	\left\{
	0,\tau_{\text{res}},\ldots,\left\lfloor\frac{\tau_\text{max}}{\tau_{\text{res}}}\right\rfloor\tau_{\text{res}}
	\right\},
\end{equation}
a new set of NLOS atoms is obtained
\begin{equation}
	\tilde{\mathcal{A}}_{\text{NLOS}} =
	\bigcup_{q=1}^Q\bigcup_{l=1}^L
	\Big\{
		\mathbf{N}_{ql}\left(\tau\right) : 0\leq\phi<2\pi, \tau\in\mathcal{D}
	\Big\}.
\end{equation}
Similarly, upon discretizing the search area $\mathcal{S}$ into a uniform grid of squared cells whose center points are
\begin{equation} \label{eq:grid_locations}
	\mathcal{G}=\{\boldsymbol{\uptheta}_1,\ldots,\boldsymbol{\uptheta}_{|\mathcal{G}|}\},
\end{equation}
where the grid resolution is defined as $d_\text{res}=\min_{i\neq j}\|\boldsymbol{\uptheta}_i-\boldsymbol{\uptheta}_j\|_2$, a new set of LOS atoms is obtained
\begin{equation}
	\tilde{\mathcal{A}}_{\text{LOS}} =
	\bigcup_{q=1}^Q
	\Big\{
		\mathbf{L}_q\left(\mathbf{b},\mathbf{p}\right)
		: \mathbf{b}\in\mathbb{C}^{L}, \mathbf{p}\in\mathcal{G}\subset\mathbb{R}^{2}, \left\|\mathbf{b}\right\|_2=u_q
	\Big\} .
\end{equation}
The discrete atomic set including the LOS and NLOS atoms is
\begin{equation}
	\widetilde{\mathcal{A}} =
	\widetilde{\mathcal{A}}_{\text{LOS}} \cup \widetilde{\mathcal{A}}_{\text{NLOS}},
\end{equation}
and the atomic norm induced by $\widetilde{\mathcal{A}}$ has the same expression than in \eqref{eq:atomic_norm}
\begin{equation} \label{eq:atomic_norm_discrete}
	\left\|\hat{\mathbf{R}}\right\|_{\widetilde{\mathcal{A}}} =
	\inf_{c^{(k)}>0}\left\{
		\sum_{k} c^{(k)} :
		\hat{\mathbf{R}}=
		\sum_{k}c^{(k)} \mathbf{A}^{(k)},\mathbf{A}^{(k)}\in\tilde{\mathcal{A}}
	\right\},
\end{equation}
except for the fact that $\mathcal{A}$ has been replaced by $\tilde{\mathcal{A}}$.
By expressing the generic atoms $\mathbf{A}^{(k)}$ in \eqref{eq:atomic_norm_discrete} as LOS or NLOS atoms, the new atomic norm $\|\cdot\|_{\widetilde{\mathcal{A}}}$ may be cast as
\begin{subequations} \label{eq:new_atomic_norm_discrete}
\begin{equation} \label{eq:atomic_norm_discrete_2}
	\left\|\hat{\mathbf{R}}\right\|_{\widetilde{\mathcal{A}}} =
	\inf_{c_{q}^{(g)},c_{ql}^{(d)}\geq0}
	\Bigg\{
		\sum_{q=1}^{Q} \sum_{k=1}^{|\mathcal{G}|}
		c_{q}^{(g)}+
		\sum_{q=1}^{Q} \sum_{l=1}^{L} \sum_{g=1}^{|\mathcal{D}|}
		c_{ql}^{(d)}
\end{equation}
with coefficients $c_g^{(g)}$ and $c_{ql}^{(d)}$ such that
\begin{multline} \label{eq:new_atomic_norm_discrete_atoms}
	\hat{\mathbf{R}} = 
	\sum_{q=1}^{Q} \sum_{g=1}^{|\mathcal{G}|}
	c_{q}^{(g)} \mathbf{L}_q\left(\mathbf{b}_q^{(g)},\boldsymbol{\uptheta}_g\right)
	+\\+
	\sum_{q=1}^{Q}\sum_{l=1}^{L}\sum_{d=1}^{|\mathcal{D}|}
	c_{ql}^{(d)}\mathbf{N}_{ql}\left((d-1)\tau_\text{res}\right)
	\Bigg\},
\end{multline}
\end{subequations}
and $\mathbf{b}_q^{(g)}$ may be any vector such that $\|\mathbf{b}_q^{(g)}\|_2=u_q$ for all $q$ and $g$.
By replacing the LOS and NLOS atoms in \eqref{eq:new_atomic_norm_discrete_atoms} by their definitions \eqref{eq:atoms_LOS}-\eqref{eq:atoms_NLOS}, constraint \eqref{eq:new_atomic_norm_discrete_atoms} becomes
\begin{multline} \label{eq:new_constraint}
	\hat{\mathbf{r}}_l =
	\sum_{q=1}^{Q} \sum_{g=1}^{|\mathcal{G}|}
	c_{q}^{(g)} b_q^{(g)}(l) \mathbf{s}_q\left(\tau_l\left(\boldsymbol{\uptheta}_g\right)\right)
	+\\+
	\sum_{q=1}^{Q}\sum_{l=1}^{L}\sum_{d=1}^{|\mathcal{D}|}
	c_{ql}^{(d)}e^{i\phi_{ql}^{(d)}} \mathbf{s}_q\left((d-1)\tau_\text{res}\right)
	\\ \text{for }l=1,\ldots,L
\end{multline}
where $\hat{\mathbf{R}}=[\hat{\mathbf{r}}_1 \cdots \hat{\mathbf{r}}_L]$ and $\mathbf{b}_q^{(g)}=[b_q^{(g)}(1) \cdots b_q^{(g)}(l)]^T$.
Next, substituting the atomic norm $\|\cdot\|_{\widetilde{\mathcal{A}}}$ instead of $\|\cdot\|_{\mathcal{A}}$ in problem \eqref{problem:atomic} with \eqref{eq:atomic_norm_discrete_2} and \eqref{eq:new_constraint} yields
\begin{subequations} \label{problem:atomic_discrete}
	\begin{align}
		\min_{\substack{c_{q}^{(g)},c_{ql}^{(d)}\geq0 \\ \|\mathbf{b}_q^{(g)}\|_2 = u_q \\ 0\leq\phi_{ql}^{(d)}<2\pi}} &\;\;
			\sum_{q=1}^{Q} \sum_{k=1}^{|\mathcal{G}|}
			c_{q}^{(g)}+
			\sum_{q=1}^{Q} \sum_{l=1}^{L} \sum_{g=1}^{|\mathcal{D}|}
			c_{ql}^{(d)} \label{problem:atomic_discrete_objective} \\
		\text{s.t.}&\;\;
			\sum_{l=1}^{L}\left\|\mathbf{r}_l-\hat{\mathbf{r}}_l\right\|_2^2 \leq\epsilon \label{problem:atomic_discrete_error} \\
		&\;\;
			\begin{multlined}
				\hat{\mathbf{r}}_l =
				\sum_{q=1}^{Q} \sum_{g=1}^{|\mathcal{G}|}
				c_{q}^{(g)} b_q^{(g)}(l) \mathbf{s}_q\left(\tau_l\left(\boldsymbol{\uptheta}_g\right)\right)+
			\\
				+\sum_{q=1}^{Q}\sum_{l=1}^{L}\sum_{d=1}^{|\mathcal{D}|}
				c_{ql}^{(d)}e^{i\phi_{ql}^{(d)}} \mathbf{s}_q\left((d-1)\tau_\text{res}\right) 
			\\
				\text{for }l=1,\ldots,L.
			\end{multlined} \label{problem:atomic_discrete_estimated}
	\end{align}
\end{subequations}
Problem \eqref{problem:atomic_discrete} is not convex because of the bilinear forms, $c_{q}^{(g)} b_q^{(g)}(l)$ and $c_{ql}^{(d)}e^{i\phi_{ql}^{(d)}}$, appearing in constraint \eqref{problem:atomic_discrete_estimated}.
This can be easily remedied by the following variable changes
\begin{subequations} \label{eq:variable_change}
\begin{align}
	c_q^{(g)}\mathbf{b}_q^{(g)} &= \mathbf{y}_q^{(g)} \\
	c_{ql}^{(d)} e^{i\phi_{ql}^{(d)}} &= z_q^{(d)}(l),
\end{align}
\end{subequations}
from which it follows that
\begin{subequations} \label{eq:variable_change_normed}
\begin{gather}
	\left\|c_q^{(g)}\mathbf{b}_q^{(g)}\right\|_2 = c_q^{(g)} u_q = \left\|\mathbf{y}_q^{(g)}\right\|_2\\
	\left|c_{ql}^{(d)} e^{i\phi_{ql}^{(d)}}\right| = c_{ql}^{(d)} = \left|z_q^{(d)}(l)\right|.
\end{gather}
\end{subequations}
Combining \eqref{eq:variable_change} and \eqref{eq:variable_change_normed} with \eqref{problem:atomic_discrete_estimated} and \eqref{problem:atomic_discrete_objective}, respectively, results in the following optimization problem
\begin{subequations} \label{problem:mixed_norm}
	\begin{align}
		\min_{\substack{\mathbf{y}_q^{(g)}\\{z}_q^{(d)}(l)}} &\;\;
			\sum_{q=1}^{Q} \sum_{g=1}^{|\mathcal{G}|}
			\frac{\left\|\mathbf{y}_{q}^{(g)}\right\|_2}{u_q}+
			\sum_{q=1}^{Q} \sum_{l=1}^{L} \sum_{d=1}^{|\mathcal{D}|}
			\left|z_{q}^{(d)}(l)\right| \label{problem:mixed_norm_objective} \\
		\text{s.t.}&\;\;
			\sum_{l=1}^{L}\left\|\mathbf{r}_l-\hat{\mathbf{r}}_l\right\|_2^2 \leq\epsilon \label{problem:mixed_norm_error} \\
		&\;\;
			\begin{multlined}
				\hat{\mathbf{r}}_l = 
				\sum_{q=1}^{Q} \sum_{g=1}^{|\mathcal{G}|}
				y_{q}^{(g)}(l)\mathbf{s}_q\left(\tau_l(\boldsymbol{\uptheta}_g)\right)+
			\\
				+\sum_{q=1}^{Q}
				\sum_{d=1}^{|\mathcal{D}|}
				z_{q}^{(d)}(l) \mathbf{s}_q\left((d-1)\tau_\text{res}\right)
			\\
				\text{for }l=1,\ldots,L,
			\end{multlined}
	\end{align}
\end{subequations}
which is convex and finite-dimensional.
Problem \eqref{problem:mixed_norm} is equivalent to the latent group Lasso problem \cite{Obozinski11}, and specific algorithms for solving \eqref{problem:mixed_norm} exist in the literature \cite{Friedlander11}. Moreover, the problem also falls into the class of second-order cone programs (SOCP), a subfamily of convex problems, for which efficient algorithms are available \cite{Andersen03}. In our case, the SOCP type of algorithms resulted in the fastest computational times. 
The variable ${y}_q^{(g)}(l)$ represents the amplitude of a LOS paths from source $q$ to sensor $l$ with delay $\tau_l(\boldsymbol{\uptheta}_g)$, whereas the variable ${z}_q^{(d)}(l)$ represents the amplitude of a NLOS path from source $q$ to sensor $l$ with delay $(d-1)\tau_\text{res}$. Let $\{\hat{\mathbf{y}}_q^{(g)}\}$ and $\{\hat{z}_q^{(g)}(l)\}$ be the solutions to problem \eqref{problem:mixed_norm}. Then, the location of the $q$-th source is the grid location $\boldsymbol{\uptheta}_{g}$ for which $\|\hat{\mathbf{y}}_q^{(g)}\|_2$ is larger than zero.
Intuitively speaking, minimizing the term $\sum_{q=1}^{Q} \sum_{l=1}^{L} \sum_{d=1}^{|\mathcal{D}|} |z_{q}^{(d)}(l)|$ in 
the objective function \eqref{problem:mixed_norm_objective} induces a sparse number of NLOS paths, whereas minimizing $\sum_{q=1}^{Q} \sum_{g=1}^{|\mathcal{G}|}\|\mathbf{y}_{q}^{(g)}\|_2$ induces a sparse number of sources' locations.

\subsection{Estimation of the Number of LOS Sensors} \label{sub:unknown_uQ}

According to Theorem~\ref{thm:u_q}, we must fix $u_q$ to a value that satisfies $\nicefrac{1}{\sqrt{S_q}}<u_q<\nicefrac{1}{\sqrt{S_q-1}}$ for each source $q$, where $S_q$ is the number of sensors receiving a LOS component from the $q$-th source.
Hence, $u_q$ must be set to $u_q = \nicefrac{1}{\sqrt{S_q-\mu}}$ for a parameter $\mu\in]0,1[$.
For instance, it has been observed that a satisfactory choice is $\mu = 0.2$ as it led to the best probability of correct recovery for all experiments in Section~\ref{sec:simulations}. 
In this section, we propose a method for estimating the sources locations that not only does not require a priori knowledge on the number of LOS sensors $S_q$, but in fact estimates them. 
The method works as follows. We start by assuming that all sensors receive a LOS component from all sources, $\hat{S}_q=L$ for all $q$, and set $u_q$ such that it satisfies \eqref{eq:u_q}. Then problem \eqref{problem:mixed_norm} is solved.
According to Lemma~\ref{lem:right_inequality}, the sources' locations associated to the optimal atomic decomposition for the $q$-th source are consistent with at least $\hat{S}_q$ paths. However, by Assumption~\ref{ass:source_identifiability}, no location is consistent with more than $S_q$ paths. Therefore, if the number of LOS sensors ($\hat{S}_q> S_q$) had been overestimated, no location would be obtained for source $q$. In the next step, $\hat{S}_q$ is decreased by one for all those sources without a location estimate, and problem \eqref{problem:mixed_norm_error} is solved again. These steps are repeated until a location is obtained for each source. The last value of $\hat{S}_q$ is the estimated number LOS sensors for the $q$-th source.
This method corresponds to steps \ref{alg:0}, \ref{alg:1}, \ref{alg:3}--\ref{alg:7} of DLM's algorithm described in Section \ref{sub:algorithm}.

\subsection{Spurious Locations} \label{sub:spurious}

It is observed in numerical simulations that when the sources are off-grid ($\mathbf{p}_q\notin\mathcal{G}$ for any $q$) and/or when the propagation delays of the paths are off-grid ($\tau_{ql}^{(m)},\tau_l(\mathbf{p}_q)\notin\mathcal{D}$ for any $q$, $l$, $m$), then some spurious locations may be obtained from problem \eqref{problem:mixed_norm}. 
This phenomenon is not new and it was studied in \cite{Fuchs00} in the case of delay estimation using the $\ell_1$-norm \eqref{problem:Fuchs}. It was shown that if the propagation delay of a path is off-grid, a peak appears around such propagation delay but also secondary peaks of much weaker strength appear further apart. 

To eliminate spurious locations, it is set a simple threshold criterion. Let $\hat{\mathbf{y}}_q^{(g)}$ be, for all $q$ and $g$, the solution to problem \eqref{problem:mixed_norm}. Ideally, for each source $q$, $\|\hat{\mathbf{y}}_q^{(g)}\|$ is zero for all $g$ except if $\boldsymbol{\uptheta}_g$ matches the location of the source. However, in practice, for a given source $q$, problem \eqref{problem:mixed_norm} may produce some spurious locations, in which case $\|\hat{\mathbf{y}}_q^{(g)}\|$ may be different than zero for more than a single value of $g$. If $\boldsymbol{\uptheta}_g$ is the true location of the $q$-th source, then the $S_q$ largest components of $\hat{\mathbf{y}}_q^{(g)}$ are the amplitudes of the $S_q$ LOS paths. In contrast, if $\boldsymbol{\uptheta}_g$ is a spurious location, we have observed through numerical experimentation that some of the $S_q$ largest entries will be approximately zero. Denote $\hat{\mathbf{y}}_q^{(g)\downarrow}$ the vector with the same components than $\hat{\mathbf{y}}_q^{(g)}$, but sorted in descending order, i.e., $|\hat{y}_q^{(g)\downarrow}(1)| \geq\cdots\geq |\hat{y}_q^{(g)\downarrow}(L)|$. We propose that, for a given source $q$, all locations whose $\hat{S}_q$ strongest components do no satisfy
\begin{equation} \label{eq:threshold_simplified}
	\left\lvert\hat{y}_q^{(g)\downarrow}\left(\hat{S}_q\right)\right\rvert
	> A\,T.
\end{equation}
are dismissed.
Here, $\hat{S}_q$ is the number of LOS paths assumed for source $q$ as explained in more detail in Section~\ref{sub:unknown_uQ}, parameter $A$ is the largest signal strength  of a LOS or NLOS path
\begin{equation} \label{eq:largest_strength}
	A = \max\left(
	\max_{g,q,l} \left|\hat{y}_q^{(g)}(l)\right|,
	\max_{d,q,l} \left|\hat{z}_q^{(d)}(l)\right|
	\right),
\end{equation}
and $T$ is a value smaller than 1. For instance, in the simulations it was used $T=\nicefrac{1}{30}$, so that all locations whose signal strengths are $20\log_{10}(30)\approx \SI{30}{\decibel}$ weaker than the strongest path are discarded.
If after the threshold criterion \eqref{eq:threshold_simplified} one or more locations still remain for the $q$-th source, then the location with the largest strength is picked
\begin{equation} \label{eq:correct_location}
	\mathbf{\hat{p}}_q = \boldsymbol{\uptheta}_{\hat{g}} :
	\hat{g} = \argmax{g}\left\lvert\hat{y}_q^{(g)\downarrow}\left(\hat{S}_q\right)\right\rvert.
\end{equation}
It is important to not skip \eqref{eq:threshold_simplified}, and apply \eqref{eq:correct_location} directly.  As explained in Section~\ref{sub:unknown_uQ}, the proposed technique works by initially assuming
that the number of LOS paths for the $q$-th source is $\hat{S}_q=L$ and if no location is obtained, then successively decreasing $\hat{S}_q$ until problem \eqref{problem:mixed_norm} outputs a location. 
However, if the threshold criterion \eqref{eq:threshold_simplified} is skipped and $\hat{S}_q>S_q$, a spurious location may be erroneously selected as the correct source location instead of concluding that there is no location and that $\hat{S}_q$ needs to be decreased.

\subsection{Tuning Parameter $\epsilon$}

Parameter $\epsilon$ in optimization problem \eqref{problem:mixed_norm} constraints the fitting error between the received signals and the estimated signals. 
Such a parameter is set so that the received signals without noise are a feasible solution.
Let $\hat{\mathbf{r}}_l$ be the noiseless received signal at sensor $l$, then we require that
\begin{equation} \label{eq:error_threshold}
	\sum_{l=1}^{L}\left\|\mathbf{r}_l-\hat{\mathbf{r}}_l \right\|_2^2 
	=\sum_{l=1}^{L}\left\|\mathbf{w}_l \right\|_2^2  \leq \epsilon.
\end{equation}
If $\epsilon$ is chosen too small, then it can happen that $\sum_{l=1}^{L}\|\mathbf{w}_l\|_2^2  \nless \epsilon$, thus excluding the noiseless signals from the set of possible solutions. Because the noise $\{\mathbf{w}_l\}_{l=1}^L$ are random independent complex Gaussian vectors of length $N$, it follows that the error normalized by the noise variance $2\sigma_w^{-2}\sum_{l=1}^{L}\|\mathbf{w}_l\|_2^2$ is a Chi-square random variable with $2NL$ degrees of freedom. Thus, parameter $\epsilon$ must be set to a large enough value so that $\sum_{l=1}^{L}\|\mathbf{w}_l\|_2^2\leq\epsilon$ is satisfied with high probability, e.g.,
\begin{equation}
	\operatorname{Pr}\left(\sum_{l=1}^{L}\left\|\mathbf{w}_l \right\|_2^2\leq\epsilon\right) = \gamma.
\end{equation}
Let $\operatorname{F}(x,k)$ be the cumulative distribution function of the chi-squared distribution with $k$ degrees of freedom evaluated at $x$ and $\operatorname{F}^{-1}$ its inverse function. Then
\begin{equation} \label{eq:error_threshold_set}
	\epsilon = \frac{\sigma_w^2}{2}\operatorname{F}^{-1}\left(\gamma,2NL\right).
\end{equation}

At low signal-to-noise ratio (SNR), it is possible that the energy of the received signals is too low compared to the energy of the noise causing that $\sum_{l=1}^{L} \|\mathbf{r}_l\|_2^2 \leq \epsilon$. In such case problem \eqref{problem:mixed_norm} has the trivial solution $y_q^{(g)}(l)=z_q^{(g)}(l)=0$ for all $q$, $l$, $g$ and $d$, and it will not output any locations.  If $\sum_{l=1}^{L} \|\mathbf{r}_l\|_2^2 \leq \epsilon$, we propose to estimate the locations by finding the LOS signals that have the highest correlation with the received signals:
\begin{equation} \label{eq:recover_sources_weakSNR}
	\hat{\mathbf{p}}_q =
	\argmax{\mathbf{p}\in\mathcal{G}} \sum_{l=1}^{L}\frac{\left|\mathbf{s}_q^H\left(\tau_l\left(\mathbf{p}\right)\right)\mathbf{r}_l\right|^2}{\left\|\mathbf{s}_q^H\left(\tau_l\left(\mathbf{p}\right)\right)\right\|^2}.
\end{equation}

\subsection{Grid Refinement} \label{sub:grid_refinement}

The computational complexity of minimizing the second-order cone problem \eqref{problem:mixed_norm} is $O((Q|\mathcal{G}|+QL|\mathcal{D}|)^{3.5})$ \cite{Lobo98}. To lower it we propose a recursive grid refinement procedure inspired by the ones in \cite{Malioutov05,Mahata10,Hu12}. The optimization problem \eqref{problem:mixed_norm} employs a grid of delays in order to estimate the NLOS paths between every source-sensor pair, and a grid of locations in order to estimate the location of every source. In total, there are $Q$ grids of location and $QL$ grids of delays. In comparison to previous grid refinement approaches, ours is a more complex due to the two different types of grids used to explain the observed data. The idea behind a grid refinement procedure is to start with a coarse grid(s) and refine each grid only around the active points.
Let $\tau_\text{res}$ and $d_\text{res}$ be the grid resolutions we wish to achieve in the grids of delays and locations, respectively, and suppose that in order to lower the computational complexity, the grids are refined $R$ times. If the resolution of the grids is increased by a factor of two at every step, then the grids resolutions at each step are
\begin{align}
	\tau_{\text{res},r} &= 2^{R-r} \tau_\text{res} \quad\text{for }r=1,\ldots,R \\
	d_{\text{res},r} &= 2^{R-r} d_\text{res} \quad\text{for }r=1,\ldots,R.
\end{align}
Let $\mathcal{D}_{ql,r}$ be the grid of delays for the source-sensor pair $(q,l)$ at step $r$, and $\mathcal{G}_{q,r}$ the grid of locations for source $q$.
\begin{figure}
	\centering
	\includegraphics[width=\columnwidth]{./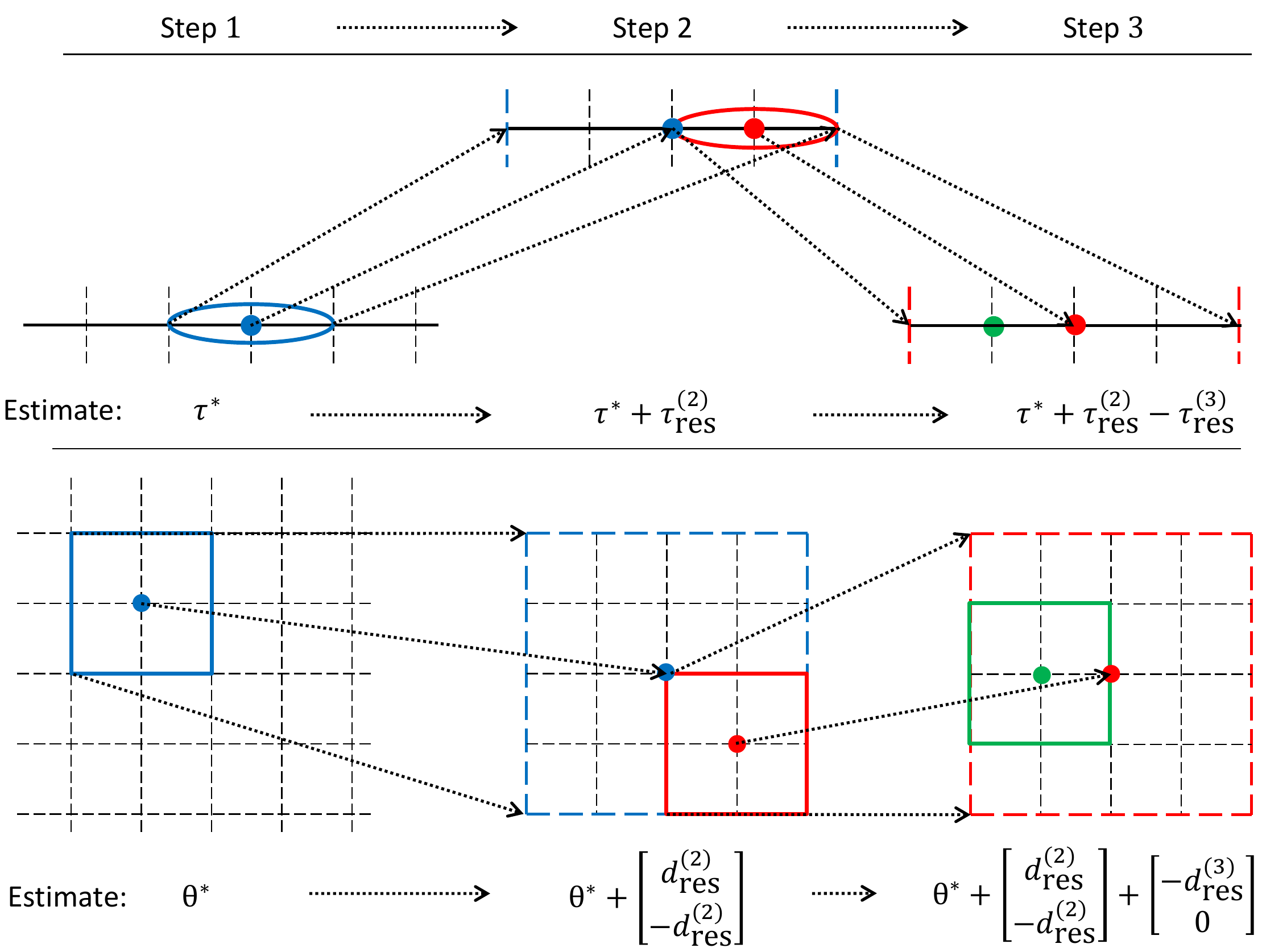}
	\caption{Illustration of three steps of a grid refinement procedure. The top image shows the grid refinement for the delays between a hypothetical source and sensor, and the bottom image shows the grid refinement for the locations of some hypothetical source, for $r=1,2,3$. The dots point out the position of a non-zero delay and location as a result of optimizing problem \eqref{problem:mixed_norm_refinement}. The positions of such non-zeros are progressively refined at each step.}
	\label{fig:grid_refinement}
\end{figure}
At the first step ($r=1$), the continuous set of delays $[0,\tau_\text{max}]$ is discretized with resolution $\tau_{\text{res},1}$
\begin{subequations} \label{eq:refinement_init}
\begin{equation}
	\mathcal{D}_{ql,1} = \left\{
		i\,\tau_{\text{res},1} \in [0,\tau_\text{max}] : i \in \mathbb{Z} 
	\right\},
	\label{eq:refinement_delays_init}
\end{equation}
and the search area $\mathcal{S}$ is discretized uniformly with resolution $d_{\text{res},1}$
\begin{equation}
	\mathcal{G}_{q,1} = \left\{
		d_{\text{res},1}
		\begin{pmatrix}
			i \\ j
		\end{pmatrix}
		\in \mathcal{S} :
		i,j \in \mathbb{Z}
	\right\}, \label{eq:refinement_locations_init}
\end{equation}
\end{subequations}
where $\mathbb{Z}$ is the set of integers.
Consider step $r$, and let the active propagation delays between the source-sensor pair $(q,l)$ be $\{\hat{\tau}_{ql,r}^{(m)} : m=1,\ldots,\hat{M}_{ql,r}\}$, and the active locations for source $q$ be $\{\hat{\mathbf{p}}_{q,r}^{(m)} : m=1,\ldots,\hat{K}_{q,r}\}$. Then, the grids at step $r+1$ include the previous active delays and locations plus some neighbor points. For instance, in addition to the active delays and locations, we include two points at the left and right of the active delays
\begin{subequations} \label{eq:refinement}
\begin{equation}
	\mathcal{D}_{ql,r+1} = \bigcup_{m=1}^{\hat{M}_{ql,r}} \left\{ \hat{\tau}_{ql,r}^{(m)}+i\,\tau_{\text{res},r+1} : i=-2,-1,0,1,2 \right\},
	\label{eq:refinement_delays}
\end{equation}
and all points within distance $2d_{\text{res},r+1}$ in the x- or y-axis of the active locations
\begin{equation}
	\mathcal{G}_{q,r+1} = \bigcup_{m=1}^{\hat{K}_{q,r}}
	\left\{
		\hat{\mathbf{p}}_{q,r}^{(m)}+
		d_{\text{res},r+1}
		\begin{pmatrix}
			i \\ j
		\end{pmatrix} :
		i,j=-2,-1,0,1,2.
	\right\}. \label{eq:refinement_locations}
\end{equation}
\end{subequations}
For a more intuitive picture on the grid refinement procedure see the examples in Figure~\ref{fig:grid_refinement} with three steps.
Next, problem \eqref{problem:mixed_norm} is solved again but only for the new grid points:
\begin{subequations} \label{problem:mixed_norm_refinement}
	\begin{align}
		\min_{\substack{\left\{\mathbf{y}_q^{(g)}\right\},\\\left\{\mathbf{z}_q^{(d)}\right\}}} &\;\;
			\sum_{q=1}^{Q} \sum_{\substack{g : \\\boldsymbol{\uptheta}_g\in\mathcal{G}_{q,r+1}}}
			\frac{\left\|\mathbf{y}_{q}^{(g)}\right\|_2}{u_q}+
			\sum_{q=1}^{Q}
			\sum_{l=1}^{L}
			\sum_{\substack{d : \\(d-1)\tau_\text{res}\\\in\mathcal{D}_{ql,r+1}}}
			\left|{z}_{q}^{(d)}(l)\right|
			\label{problem:mixed_norm_refinement_objective} \\
		\text{s.t.} &\;\;
			\sum_{l=1}^{L}\left\|\mathbf{r}_l-\hat{\mathbf{r}}_l\right\|_2^2 \leq\epsilon \label{problem:mixed_norm_refinement_error} \\
		&\;\;
			\begin{aligned}
				\hat{\mathbf{r}}_l = 
				\sum_{q=1}^{Q} \sum_{\substack{g : \\\boldsymbol{\uptheta}_g\in\mathcal{G}_{q,r+1}}}
				y_{q}^{(g)}(l)\mathbf{s}_q\left(\tau_l(\boldsymbol{\uptheta}_g)\right)+&
			\\
				+\sum_{q=1}^{Q}
				\sum_{\substack{d : \\(d-1)\tau_\text{res}\\\in\mathcal{D}_{ql,r+1}}}
				z_{q}^{(d)}(l) \mathbf{s}_q\left((d-1)\tau_\text{res}\right)&
			\\
				\text{for }l=1,\ldots,L.
			\end{aligned}
	\end{align}
\end{subequations}
The process of refining the grids and solving problem \eqref{problem:mixed_norm_refinement} is repeated for the $R$ steps. The proposed grid refinement procedure corresponds to steps \ref{alg:1.5}--\ref{alg:2} in DLM's algorithm described in Section \ref{sub:algorithm}.

In regards to the resolutions of the grids, instead of choosing the resolution of both types of grids completely independently, they are set according to
\begin{equation} \label{eq:resolution_link}
	c\,\tau_{\text{res},r} = d_{\text{res},r}\quad\text{ for any }r,
\end{equation}
where $c$ is the speed of light.

\section{Algorithm} \label{sub:algorithm}

In this section, it is presented the proposed DLM algorithm for source localization in multipath.
The inputs to the DLM algorithm are the received signals $\{\mathbf{r}_l\}_{l=1}^L$ and the noise variance $\sigma_w^2$. The number of sensors $L$, sources $Q$ and samples per sensor $N$ are assumed known. The outputs of the algorithm are the source locations estimates $\{\hat{\mathbf{p}}_q\}_{q=1}^Q$. The summary of the proposed algorithm for direct localization of RF sources in the presence of multipath is as follows:

\begin{algorithmic}[1]

	\Require $L$, $Q$, $N$, $\{\mathbf{r}_l\}_{l=1}^L$ and $\sigma_w^2$.
		
	\Param $\mathcal{S}$, $\tau_\text{max}$, $d_\text{res}$, $T$, $\gamma$.
	
	\Ensure The source locations estimates $\{\hat{\mathbf{p}}_q\}_{q=1}^Q$
	

	\Proc
		
		\For{sensor $l$ where $l=1,\ldots,L$}
	
			\State Estimate multipath TOA's $\{\tilde{\tau}_{ql}^p\}$ using \cite{Fuchs99} or any other delay estimation technique of choice.
			
			\State Estimate multipath amplitudes $\{\tilde{\alpha}_{ql}^p\}$ through \eqref{eq:amplitudes_estimates}.
			
			\State Reduce NLOS interference on the received signal $\mathbf{r}_l$ through \eqref{eq:signal_cancelled}.
			
		\EndFor
		
		\State Compute parameter $\epsilon$  through \eqref{eq:error_threshold_set}.
		
		\If{$\sum_{l-1}^{L}\|{\mathbf{r}}_l\|_2>\epsilon$}
		
			\State Compute the initial coarse grids with \eqref{eq:refinement_init} and \eqref{eq:resolution_link}.
			
			\State Initialize $\hat{S}_q = L$ for $q=1,\ldots,Q$ \label{alg:-1}
			
			\While{$\hat{\mathbf{p}}_q=\emptyset$ for any $q\in\{1,\ldots,Q\}$} \label{alg:0}
			
				\State $u_q = \frac{1}{\sqrt{\hat{S}_q-0.2}}$ for $q=1,\ldots,Q$ \label{alg:1}
				
				\For{$r=1,\ldots,R$} \label{alg:1.5}
					
					\State Optimize problem \eqref{problem:mixed_norm_refinement}. Output: $\{\hat{\mathbf{y}}_{q,r}^{(g)}\}$ and $\{\hat{{z}}_{q,r}^{(d)}(l)\}$. \label{alg:conic}
					
					\State Find the active locations $\{\hat{\mathbf{p}}_{q,r}^{(m)}\}$ and the active delays $\{\hat{\tau}_{ql,r}^{(m)}\}$.
					
					\If{$r\neq1$}
					
						\State Refine the grid with \eqref{eq:refinement} and \eqref{eq:resolution_link}.
					
					\EndIf
					
				\EndFor \label{alg:2}
				
				\State Compute $A$ through \eqref{eq:largest_strength}.
				
				\For{$q=1,\ldots,Q$} \label{alg:3}
				
				
					\If{any locations are active for the $q$-th source and such locations satisfy \eqref{eq:threshold_simplified}} \label{alg:4}
						\State Estimate the location of the $q$-th source through \eqref{eq:correct_location}.
					\ElsIf{$\hat{S}_q>1$} \label{alg:5}
						\State $\hat{S}_q \gets \hat{S}_q-1$\label{alg:6}
					\Else
						\State Estimate the location of the $q$-th source through \eqref{eq:recover_sources_weakSNR}.
					\EndIf
				\EndFor \label{alg:7}
			\EndWhile
			
		\Else 
		
			\State Recover sources' locations through \eqref{eq:recover_sources_weakSNR}.
			
		\EndIf
\end{algorithmic}

\section{Numerical Results} \label{sec:simulations}

In this section, we illustrate the performance of the localization method by numerical examples, and compare it to other existing techniques via Monte Carlo simulations. In all examples, the sources and sensors are positioned within a square area of \SI{200x200}{\metre}, which is divided into a grid of \SI{1x1}{\metre} cells, thus resulting in 40,000 cells. Unless stated otherwise, we simulate a scenario containing one source positioned at coordinates (\SI{20}{\meter},\SI{30}{\metre}) and 5 sensors positioned at coordinates (\SI{40}{\meter}, \SI{-55}{\metre}), (\SI{-45}{\meter}, \SI{-40}{\metre}), (\SI{-50}{\meter}, \SI{55}{\metre}), (\SI{60}{\meter}, \SI{60}{\metre}) and (\SI{5}{\meter}, \SI{0}{\metre}) as pictured in Fig.~\ref{fig:scenario}.
\begin{figure}
	\centering
	\includegraphics[width=\columnwidth]{./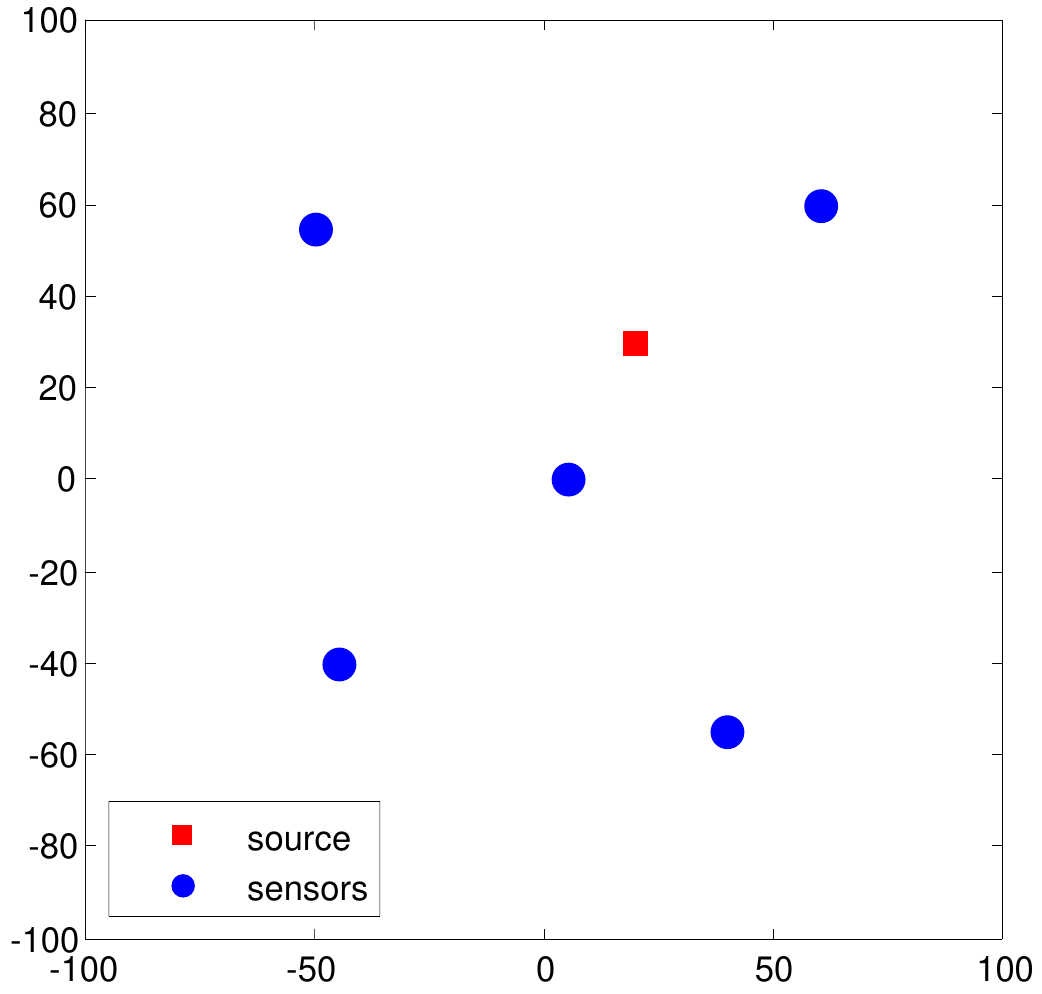}
	\caption{Map with the locations of the sensors and source used in many of the experiments in Section \ref{sec:simulations}.}
	\label{fig:scenario}
\end{figure}
The signals emitted by the sources are drawn from a white Gaussian process and filtered so that their passband bandwidth is \SI{10}{\mega\hertz}. 
If multiple sources, such as in the experiment of Section~\ref{sub:multiple_sources}, the waveforms are generated independently, thus the cross-correlation between signals from different sources is low but not necessarily zero.
All sensors are time-synchronized and sample the received signals at a \SI{20}{\mega\hertz} frequency for a total time of \SI{5}{\micro\second}, thus each sensor observes 100 samples. For each source, we define the SNR per observation time as
\begin{equation}
	\text{SNR} = 10\log_{10}\left(\frac{NLP_\text{LOS}}{\sigma_w^2}\right),
\end{equation}
where $N$ is the number of observations per sensor, $L$ is the number of sensors, $P_\text{LOS}$ is the power of a LOS component, and $\sigma_w^2$ is the variance of the sampled noise. According to \cite{Turin72}, in urban and suburban areas, the signal strengths of LOS and NLOS paths may be modeled as random variables with log-normal distribution. It follows that the channel tap powers expressed in \si{\dB} are random variables with normal distribution. For our simulations, we set the standard deviation of the tap powers to \SI{10}{\dB}.
All multipath experiments simulate Turin's urban channel model \cite{Turin72}. The arrival times of NLOS components at all sensors are modelled by a Poisson process. The mean inter-arrival time is set to \SI{0.2}{\micro\second}, and the average power $\bar{P}$ of a NLOS arrival at sensor $l$ is governed by the power delay profile (PDP)
\begin{equation}
	\bar{P}_l(t) = \exp\left(-\frac{t-t_l^{(0)}}{t_{rms}}\right)
\end{equation}
where $t$ is the arrival time of the NLOS component, $t_l^{(0)}$ is the arrival time of the LOS path and $t_{rms}$ is the root mean square (rms) delay spread. An exponential PDP assigns smaller power to later arrivals. Unless otherwise stated, all LOS paths have normalized unit power. In multipath environments, it is possible that some sensors have their LOS blocked, thus at each Monte Carlo repetition one randomly selected sensor among the five receives no LOS component.

The figures compare the performance of the following two direct localization techniques:
\begin{enumerate}
	\item \textbf{DLM} --- The proposed technique.
	\item \textbf{DPD} --- Direct Position Determination as originally propose in \cite{Weiss05} for AWGN channels.
	\item \textbf{DPD with NLOS mitigation} --- In this variation, DPD is preceded by the NLOS mitigation method introduced in Section \ref{sec:1stPhase}. The goal is to show that DLM outperforms this variation of DPD, to demonstrate that DLM's high accuracy is not due only to such NLOS interference mitigation method.
	\item \textbf{Indirect, CS TOA} --- Indirect localization comprises a two-step process. In a first step, the TOA of the first path at each sensor is estimated by a delay estimation method based on compressive sensing (CS) \cite{Fuchs99}; in a second step, multilateration is performed using the well-known method developed by Chen \cite{Chen99} to mitigate the problem of potential LOS blockage on sensors.
	\item \textbf{Indirect, matched filter TOA} --- Same as previous indirect technique, except that TOA's are estimated by a threshold-based matched filter.
\end{enumerate}
To solve the conic problem in DLM (step \ref{alg:conic} of DLM's algorithm described in Section~\ref{alg:conic}) and in CS TOA, we utilize the Mosek solver \cite{Mosek}.
The bandwidth of the emitted signals limits the localization accuracy, and it is known that the ranging resolution is approximately
\begin{equation} \label{eq:signal_resolution}
	r=\frac{c}{B}
\end{equation}
where $c$ is the speed of light and $B$ is the signal bandwidth. For the particular case of a \SI{10}{\mega\hertz} bandwidth, the waveform ranging resolution is then \SI{30}{\meter}. Also, we define the probability of correct recovery for the case of a single source as
\begin{equation} \label{eq:probability_detection}
	P_c = \frac{1}{Z}\sum_{z=1}^{Z}\mathds{1}\left(|\mathbf{p}-\hat{\mathbf{p}}^{(z)}|< \zeta\right),
\end{equation}
where $\mathbf{p}$ is the true source's location, $Z$ is the number of times that the experiment is repeated, $\hat{\mathbf{p}}^{(z)}$ is the source's location estimate for the $z$-th repetition, and $\mathds{1}(\cdot)$ is the indicator function. Unless otherwise stated, the error is set to $\zeta=\nicefrac{r}{3}$, which is a value smaller than the ranging resolution $r$. In some of the tests, it is plotted the normalized root mean square error
\begin{equation}
	\text{rMSE} = \frac{1}{r}\sqrt{\frac{1}{Z}\sum_{z=1}^{Z}\left(\mathbf{p}-\hat{\mathbf{p}}^{(z)}\right)^2}.
\end{equation}
All experiments are repeated 1000 times, i.e., $Z=1000$.

\subsection{Performance in the Absence of Multipath}

This experiment's purpose is to validate that DLM performs optimally in the absence of multipath, i.e., its accuracy matches that of the DPD, which was shown to be optimal (see \cite{Weiss05}). All five sensors receive LOS components, and Turin's channel model does not apply here, since there are no NLOS paths.
\begin{figure}
	\centering
	\includegraphics[width=\columnwidth]{./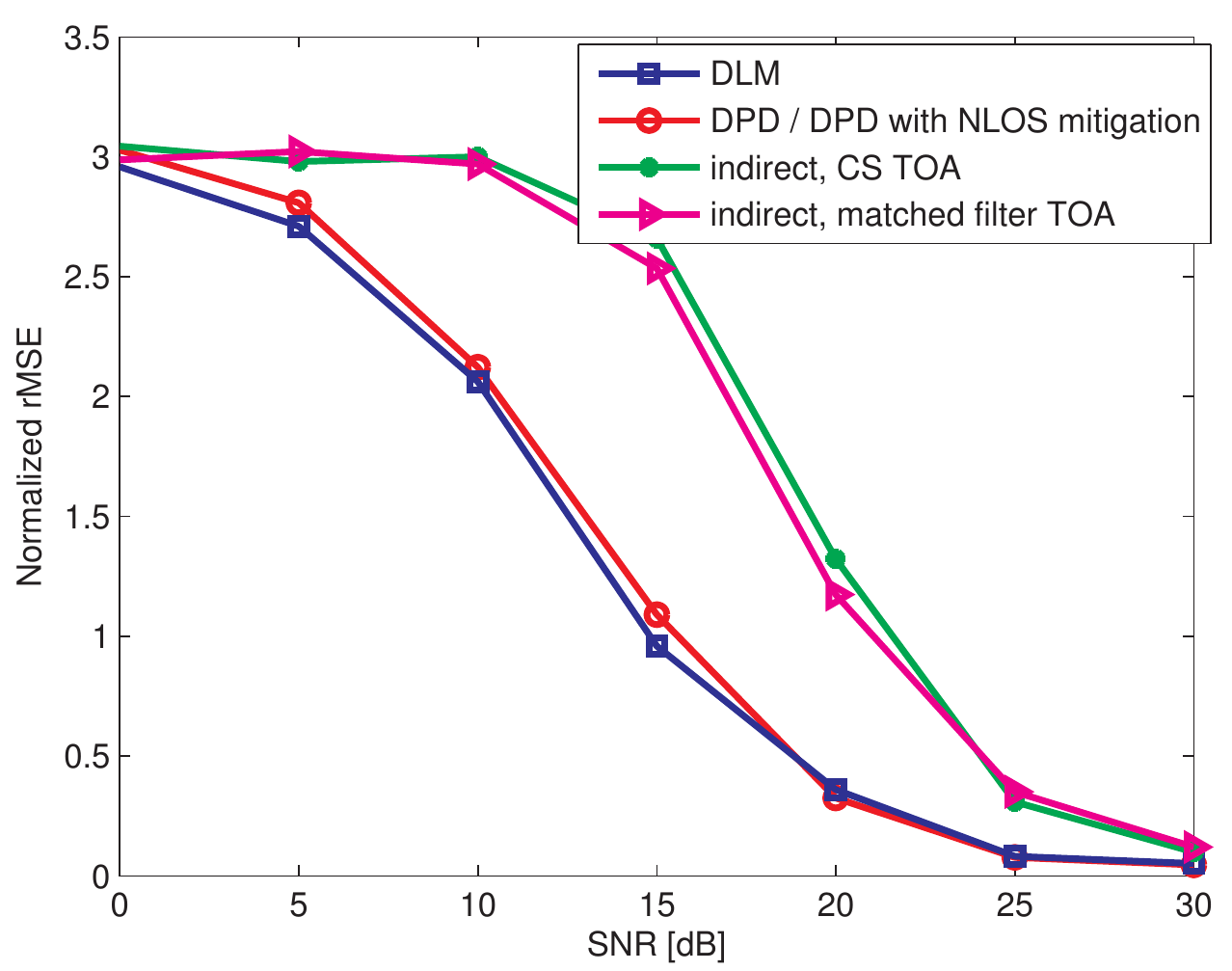}
	\caption{Root mean square error vs.\ SNR for the scenario in Fig.~\ref{fig:scenario} when no multipath is present.}
	\label{fig:noMultipath}
\end{figure}
\begin{figure}
	\centering
	\includegraphics[width=\columnwidth]{./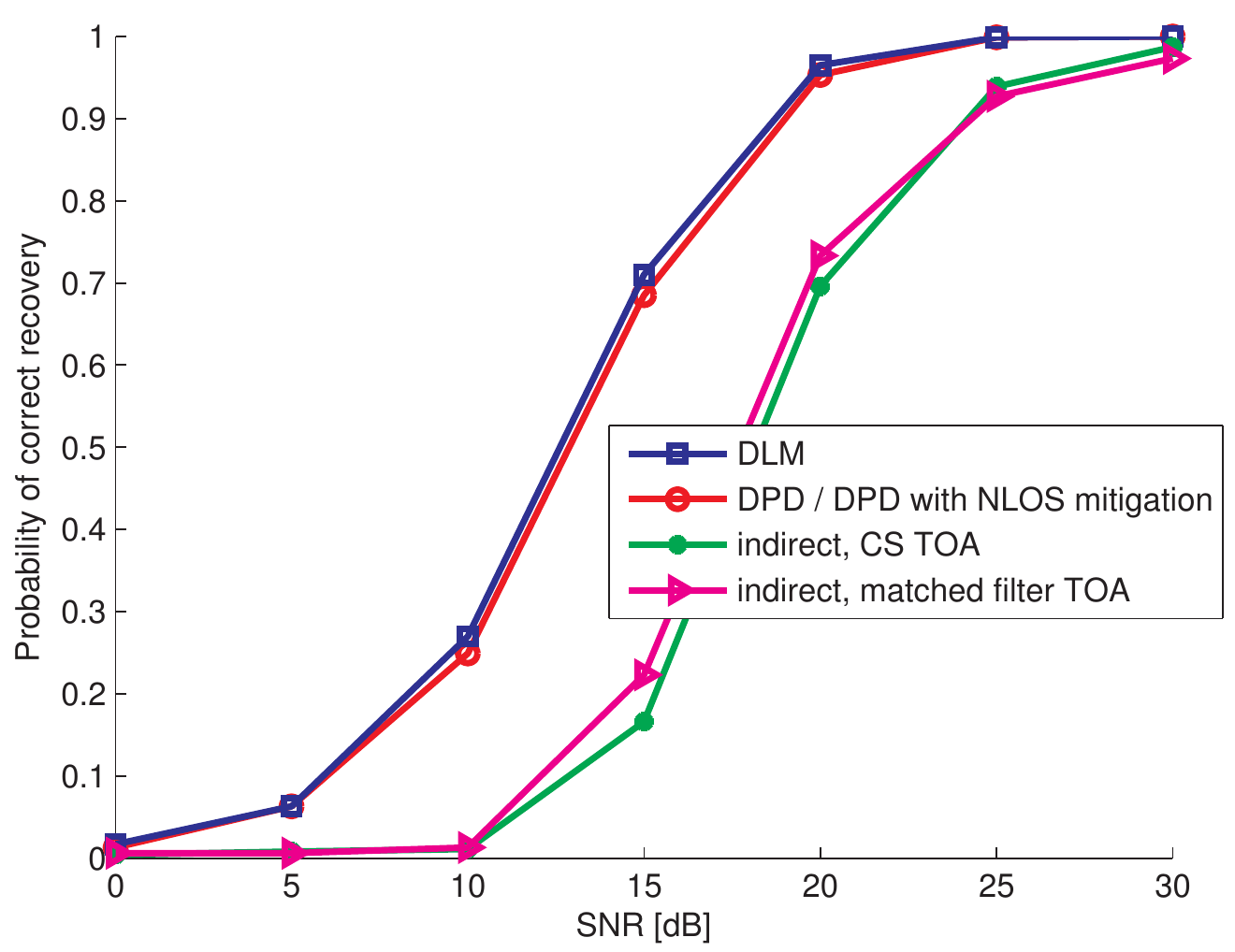}
	\caption{Probability of correct recovery vs.\ SNR for the scenario in Fig.~\ref{fig:scenario} when no multipath is present.}
	\label{fig:correctNoMultipath}
\end{figure}
Figures \ref{fig:noMultipath} and \ref{fig:correctNoMultipath} plot the rMSE and the probability of correct recovery, respectively. DPD and DPD with NLOS mitigation are plotted together because their performance is exactly the same in the absence of multipath.
As it can be observed, DPD and DLM perfom equally in terms of rMSE and probability of recovery because essentially both techniques, in the absence of multipath, look up for the location whose LOS signals correlate the most with the received signals. DPD and DLM perform substantially better in comparison to indirect techniques as it is expected from the theory.

\subsection{Performance in Multipath} \label{sub:sim_multipath}

In this example is simulated the multipath channel model described at the top of this section. The rMSE and the probability of correct recovery vs.\ SNR are plotted in Figs.\ \ref{fig:rMSE} and \ref{fig:correct}, respectively.
\begin{figure}
	\centering
	\includegraphics[width=\columnwidth]{./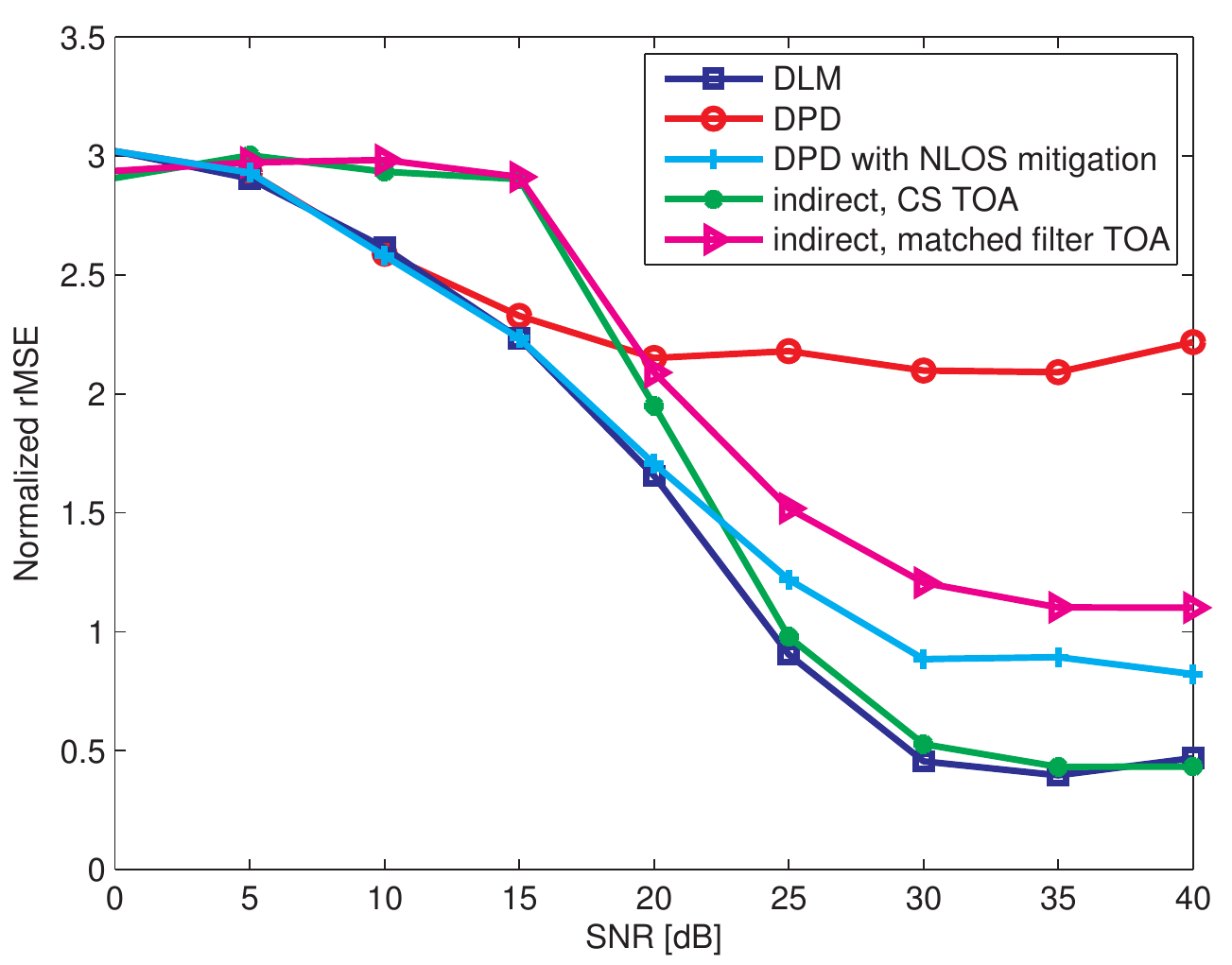}
	\caption{Root mean square error vs.\ SNR for the scenario in Fig.~\ref{fig:scenario} in a multipath environment.}
	\label{fig:rMSE}
\end{figure}
\begin{figure}
	\centering
	\includegraphics[width=\columnwidth]{./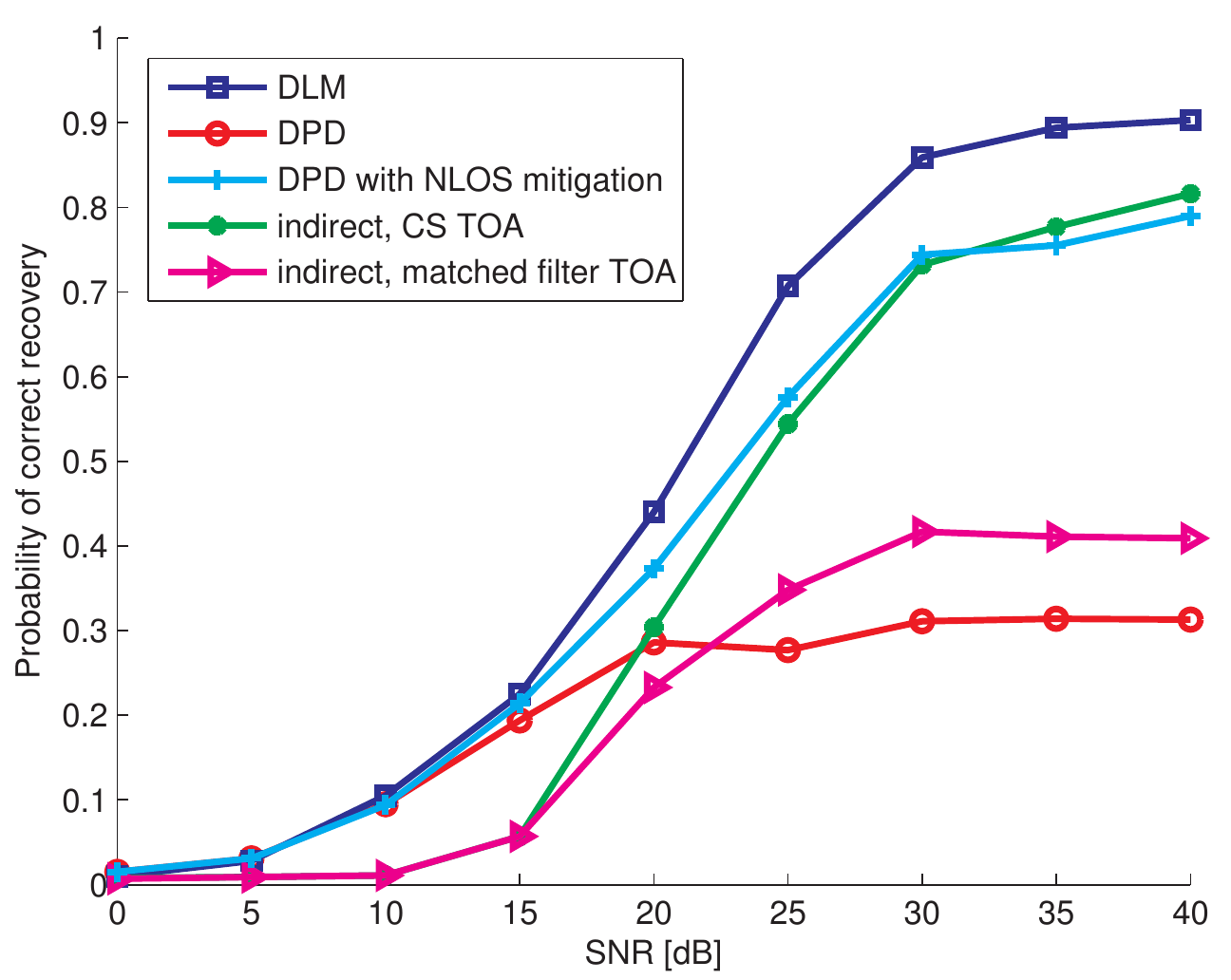}
	\caption{Probability of correct recovery vs.\ SNR for the scenario in Fig.~\ref{fig:scenario} in a multipath environment.}
	\label{fig:correct}
\end{figure}
Observe in Figs.\ \ref{fig:rMSE} and \ref{fig:correct} that DPD fails to localize the sources irrespective of the SNR due to the fact that it is not designed for multipath. Also, the indirect technique relying on estimating by matched filter the TOA of the first arrival, does not perform much better than DPD because matched filter suffers from severe bias when multiple arrivals overlap in time.
Interestingly, it seems as if DLM does not perform better, in terms of rMSE, than the indirect technique employing CS TOA estimates.
\begin{figure}
	\centering
	\includegraphics[width=\columnwidth]{./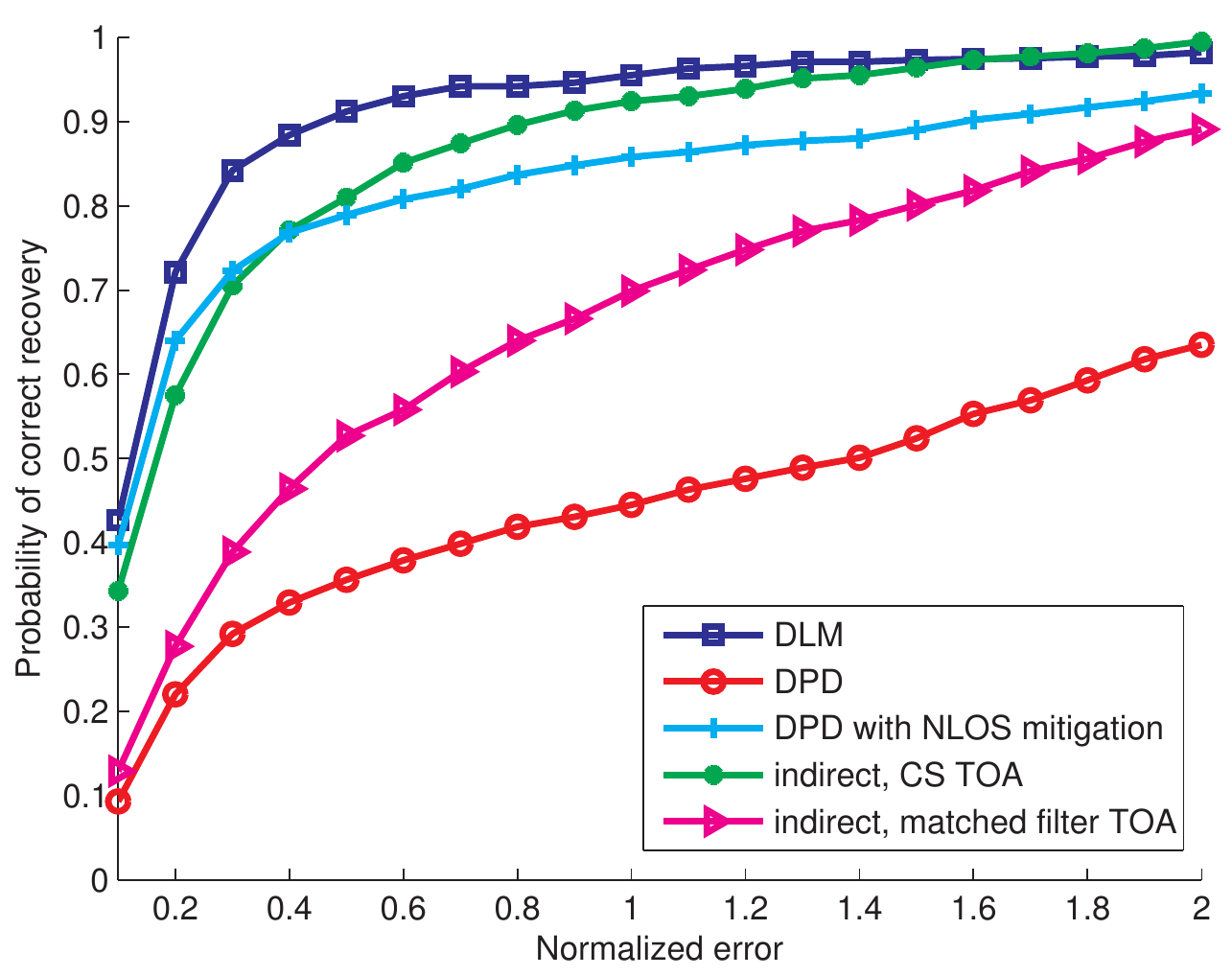}
	\caption{Probability of correct recovery vs.\ error for the scenario in Fig.~\ref{fig:scenario} for a \SI{30}{\decibel} SNR.}
	\label{fig:error}
\end{figure}
In Fig.~\ref{fig:error}, the probability of correct recovery \eqref{eq:probability_detection} is plotted for different errors ranging from 0 to $2r$ for an SNR value of \SI{30}{\decibel}. DLM achieves a high probability of correct recovery for much smaller errors than the other methods. For instance, DLM's  probability of correct recovery is 0.9 for an error smaller than $0.4r$, whereas for the indirect technique with CS TOA, such probability is only achieved when the error is $0.9r$. The other techniques perform substantially worse than DLM, and in fact, they never achieve a probability of recovery close to one even when very large errors are allowed. In summary, DLM can achieve a high probability of recovery for very small errors. In terms of rMSE, DLM and the indirect technique employing CS TOA estimates perform similarly, because in the rMSE metric small errors have a much smaller impact compared to the large errors. Hence, in the next experiments, we focus only on the probability of correct recovery.

\subsection{Probability of Correct Recovery vs.\ Delay Spread}

The considered channel model depends on the rms delay spread, which determines the interval between the LOS component and the last arriving NLOS component. In general, larger delay spreads imply more multipath that make the localization more challenging.
\begin{figure}
	\centering
	\includegraphics[width=\columnwidth]{./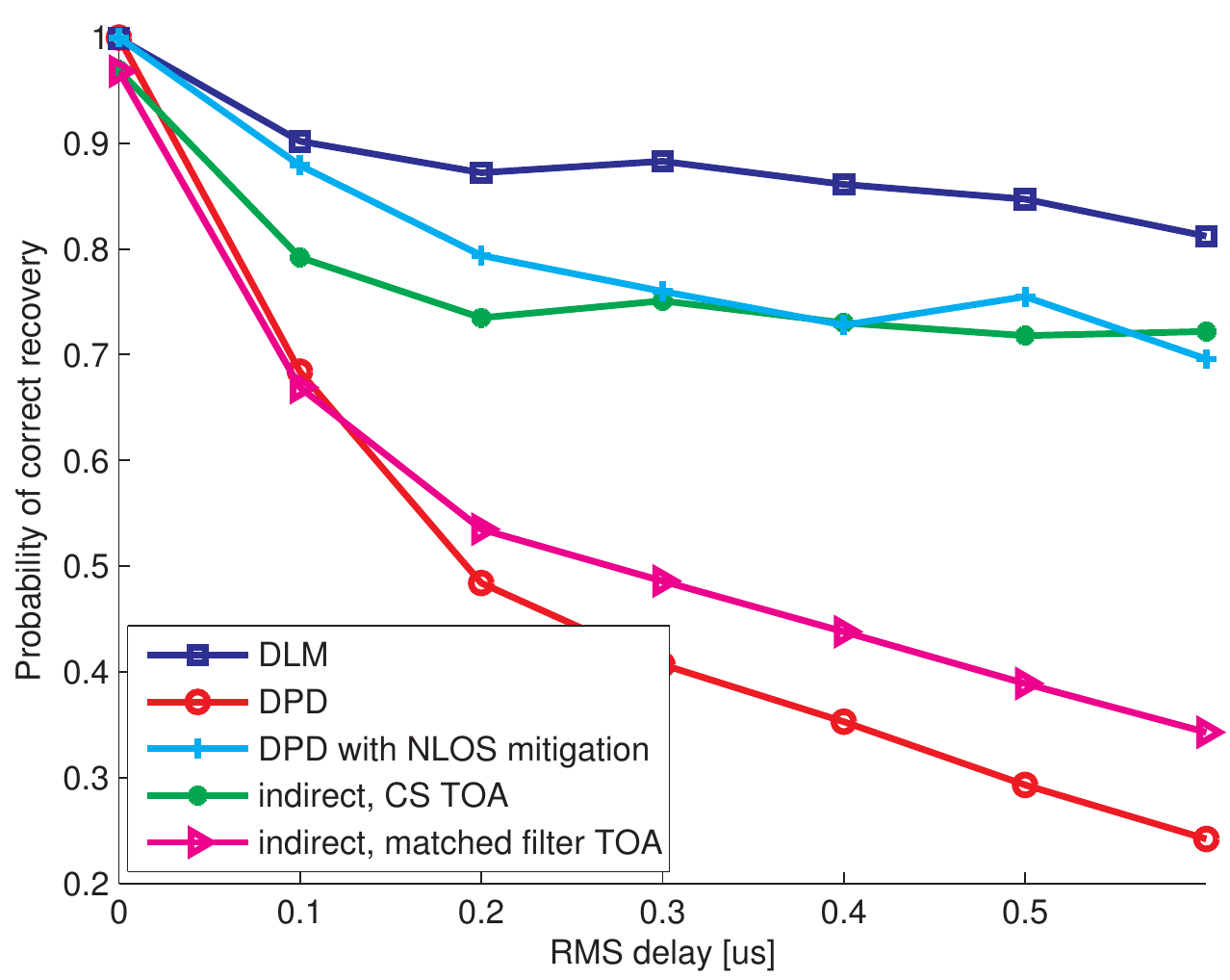}
	\caption{Probability of correct recovery vs.\ rms delay spread for the scenario in Fig.~\ref{fig:scenario} in a multipath environment for a \SI{30}{\decibel} SNR.}
	\label{fig:delay_spread}
\end{figure}
In Fig.~\ref{fig:delay_spread}, the probability of correct recovery is plotted for an rms delay spread ranging from 0 to \SI{0.6}{\micro\second} at \SI{30}{\decibel} SNR. At high-SNR and at a zero delay spread all localization techniques perform similarly. However, as soon as the rms delay spread increases by a little as \SI{0.2}{\micro\second}, DPD's performance drops markedly. The techniques specifically designed for multipath channels, such as the indirect technique based on CS TOA estimates and DLM, degrade very slightly as the rms delay spread increases. DLM outperforms all other techniques and is capable of recovering the sources locations with a high probability of correct recovery irrespective of the delay spread.

\subsection{Probability of Correct Recovery vs.\ Number of Grid Refinement Steps}

The purpose of the grid refinement procedure introduced in Section \ref{sub:grid_refinement} is to reduce the computational complexity of DLM, while maintaining the localization accuracy.
Figure~\ref{fig:grid_performance} plots the probability of correct recovery (square marker) and the DLM's mean elapsed time at Stage 2 (circle marker), versus the number of grid refinement steps.
\begin{figure}
	\centering
	\includegraphics[width=\columnwidth]{./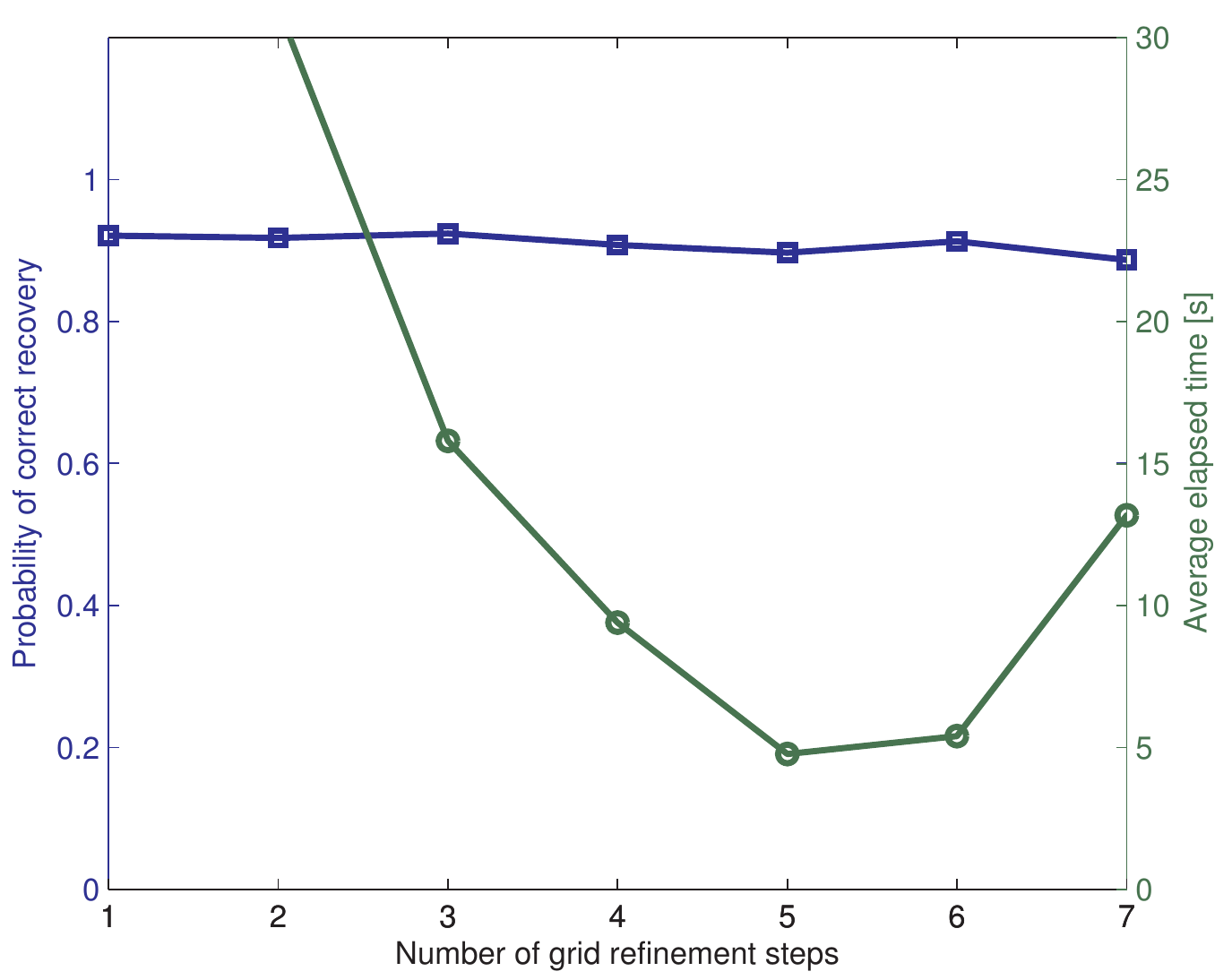}
	\caption{The left axis plots the probability of correct recovery and the right axis the mean elapsed time for running DLM's Stage 2, vs.\ the number of grid refinement steps. The SNR is fixed at \SI{30}{\decibel}.}
	\label{fig:grid_performance}
\end{figure}
The SNR is fixed at \SI{30}{\decibel}. DLM is run on a computer with an Intel Xeon processor at \SI{2.8}{\giga\hertz} with 4 GB of RAM memory. Perhaps surprisingly, the probability of correct recovery remains almost constant irrespective of the number of steps. The lowest computational time is \SI{5}{\second} and is obtained for five grid refinement steps. The number of grid steps that results in the lowest computational time depends on many factors such as number of grid points, efficiency of the conic solver, particular scenario and so forth. Thus, in general, the optimum number of steps must be found by in situ testing.

\subsection{Number of LOS sensors} \label{sub:LOSsensors}

The information about the sources' locations is carried on the LOS components (see signal model \eqref{eq:signal_sampled}). This experiment evaluates DLM's probability of correct recovery versus the number of sensors receiving a LOS path. Since the setup of Fig.~\ref{fig:scenario} includes five sensors, the number of LOS sensors is varied between one and five.
\begin{figure}
	\centering
	\includegraphics[width=\columnwidth]{./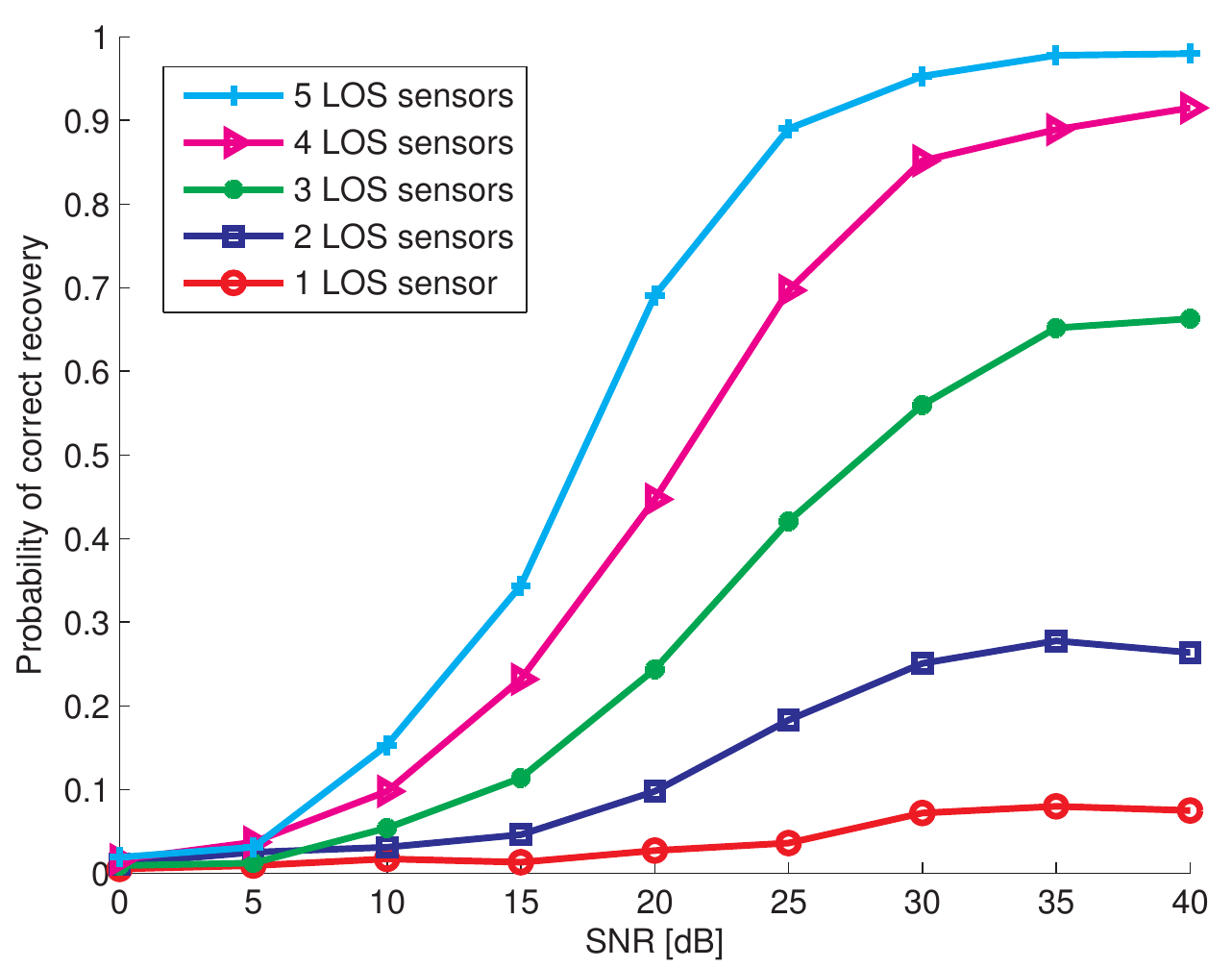}
	\caption{Probability of correct recovery vs.\ the number of LOS sensors for a \SI{30}{\decibel} SNR.}
	\label{fig:LOSsensors}
\end{figure}
As expected, Fig.~\ref{fig:LOSsensors} shows that a larger number of LOS sensors results in better localization accuracy. For the cases where there is only one or two LOS sensor, the probability of correct recovery drops drastically because, in general, the minimum number of LOS sensors required for unambiguous TOA-based localization is three.

\subsection{Multiple Sources} \label{sub:multiple_sources}

In this example is evaluated the probability of correct recovery of multiple sources emitting different signals overlapping in the time and frequency domain. The SNR is fixed at \SI{30}{\decibel}. The definition of the probability of correct recovery defined in \eqref{eq:probability_detection} was for a single source. In the case of multiple sources, we define the average probability of correct recovery
\begin{equation} \label{eq:probability_detection_average}
	P_{\text{av}} = \frac{1}{ZQ}\sum_{z=1}^{Z}\sum_{q=1}^{Q}\mathds{1}\left(|\mathbf{p}_q-\hat{\mathbf{p}}_q^{(z)}|< \zeta\right),
\end{equation}
where $\mathbf{p}_q$ is the true location of the $q$-th source, $\hat{\mathbf{p}}_q^{(z)}$ is its estimate, and $\zeta$ is the error set to $\zeta=\nicefrac{r}{3}$ where $r$ is the waveform's ranging resolution as defined in \eqref{eq:signal_resolution}.
\begin{figure}
	\centering
	\includegraphics[width=\columnwidth]{./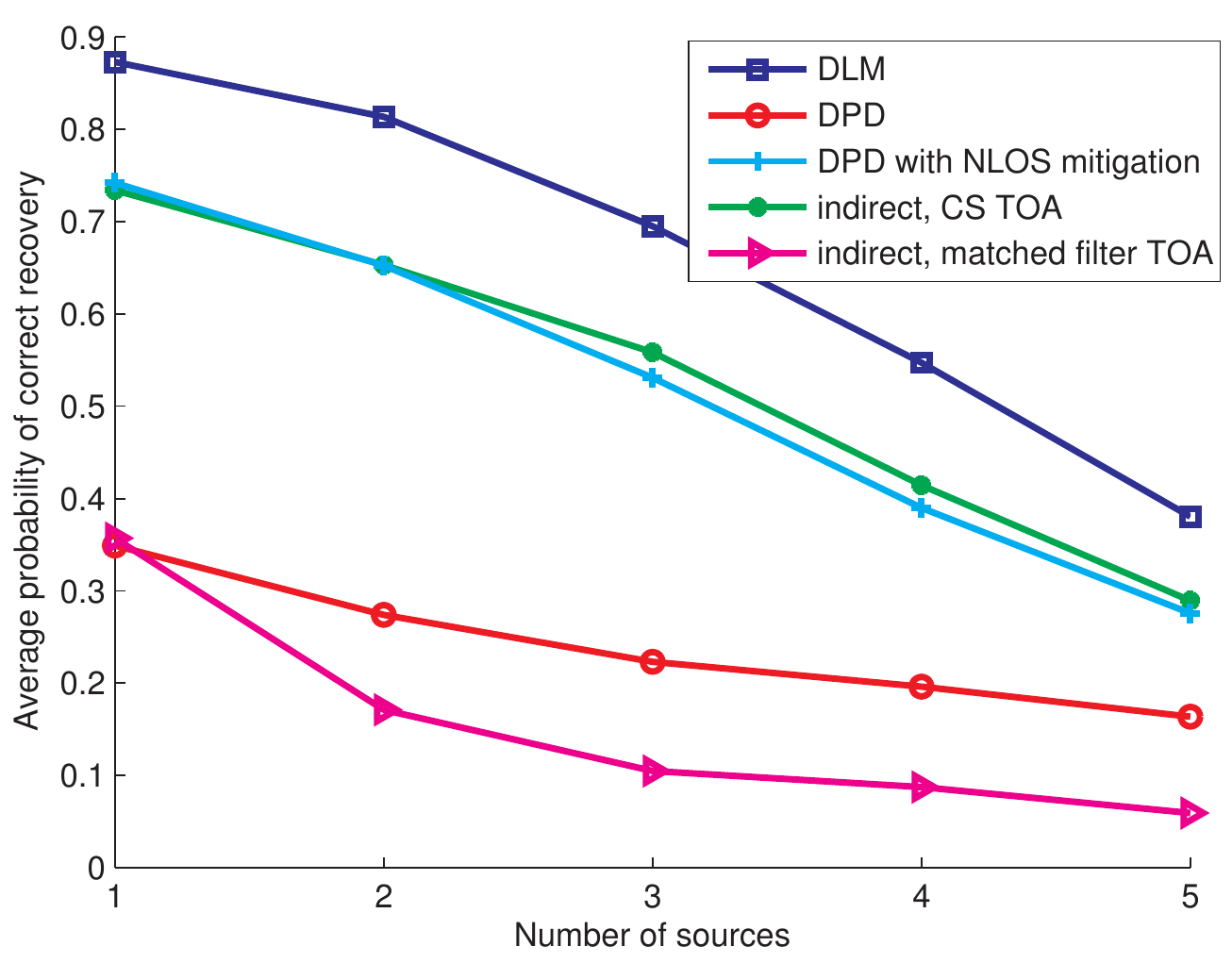}
	\caption{Average probability of correct recovery vs.\ the number of sources for a \SI{30}{\decibel} SNR.}
	\label{fig:sources_average}
\end{figure}
In Fig.~\ref{fig:sources_average}, it is shown how the average probability of correct recovery degrades as the number of sources increases. This is expected because the signals from different sources interfere with each other. Nonetheless, we can observe that DLM outperforms all other localization techniques when localizing multiple sources.

\section{Conclusions} \label{sec:conclusions}

By combining concepts from compressive sensing and direct localization, we have developed a  novel direct localization technique for mutliple sources in the presence of multipath (DLM).  This technique assumes the emitted waveforms are known but requires no prior information on the channel.  In fact, its localization accuracy is almost constant irrespective of the delay spread of the channel. At the core of our technique lies an optimization problem that recovers the locations of the sources with high accuracy by  exploiting properties that are different for LOS and NLOS paths. It is shown theoretically how to set the algorithm's parameters to guarantee successful recovery including a parameter that determines the relative contributions of the LOS and NLOS components to the cost function. Contrary to indirect techniques, the proposed technique is capable of localizing sources with an accuracy beyond that of the signal resolution, with high probability.
In absence of multipath, DLM's accuracy matches that of the maximum likelihood estimator of the sources' locations. In the presence of multipath, DLM's accuracy is better than the accuracy of all compared methods and can find the sources' location even when some sensors suffer from LOS blockage.
The gain in localization accuracy does not come for free, as DLM requires larger computational resources than previous techniques. To this end, we propose a grid refinement procedure which substantially reduces the computational complexity. Nonetheless, this should be less of a burden as computational power keeps increasing and second-order cone program solvers become more efficient. DLM's high accuracy is validated by extensive numerical simulations.


\appendices

\section{Proof of Lemma~\ref{lem:right_inequality}} \label{appen:right_inequality}

	Let an atomic decomposition  of $\mathbf{R}$ be \eqref{eq:atomic_decomposition}. The goal of the proof is to show that all locations, $\mathbf{p}_q^{(k)}$ for $q=1,\ldots,Q$ and $k=1,\ldots,K_q$ are consistent with $S_q$ or more paths if
	\begin{equation} \label{eq:right_inequality_1}
		\left\lVert\mathbf{p}_q^{(k)}\right\rVert_2 = u_q < \frac{1}{\sqrt{S_q-1}}.
	\end{equation}
	From \eqref{eq:atomic_decomposition}, the signal at the $l$-th sensor is
	\begin{multline} \label{eq:right_inequality_2}
		\mathbf{r}_l = \sum_{q=1}^{Q}\sum_{\substack{k=1\\b_q^{(k)}(l)\neq0}}^{K_q}c_{q}^{(k)}
		{b}_q^{(k)}(l)
		\mathbf{s}_q\left(\tau_l\left(\mathbf{p}_q^{(k)}\right)\right)
		+\\+
		\sum_{q=1}^{Q}\sum_{k=1}^{K_{ql}}
		c_{ql}^{(k)} e^{i\phi_{ql}^{(k)}}
		\mathbf{s}_q\left(\tau_{ql}^{(k)}\right).
	\end{multline}
	By Assumption~\ref{ass:source_identifiability}, $\tau_l(\mathbf{p}_q^{(k)})$ is a true propagation if $b_q^{(k)}(l)\neq0$.
	Therefore, if $\mathbf{b}_q^{(k)}$ has $S_q$ or more non-zero entries, 
	according to Definition~\ref{def:consistency}, $\mathbf{p}_q^{(k)}$ is consistent with $S_q$ or more paths. It is left to prove that $\|\mathbf{b}_q^{(k)}\|_0\geq S_q$. The proof is by contradiction.
	For instance,
	\begin{equation} \label{eq:right_inequality_4}
		\left\|\mathbf{b}_1^{(1)}\right\|_0 < S_1.
	\end{equation}
	and let the atomic decomposition \eqref{eq:atomic_decomposition} in which the atom $\mathbf{L}_1\left(\mathbf{b}_1^{(1)},\mathbf{p}_1^{(1)}\right)$ is replaced by $\|\mathbf{b}_q^{(k)}\|_0$ NLOS atoms as follows
	\begin{equation} \label{eq:right_inequality_5}
		\mathbf{L}_1\left(\mathbf{b}_1^{(1)},\mathbf{p}_1^{(1)}\right) =
		\sum_{\substack{l=1\\b_1^{(1)}(l)\neq 0}}^L \left|b_1^{(1)}(l)\right|
		\mathbf{N}_{11}\left(\tau_l\left(\mathbf{p}_1^{(1)}\right)\right).
	\end{equation}
	Consider now the two decompositions \eqref{eq:atomic_decomposition} and the one obtained with \eqref{eq:right_inequality_5}.
	The costs of the two decompositions differ only in the coefficients of the atoms shown in \eqref{eq:right_inequality_5}. Ignoring the common atoms, the cost of decomposition \eqref{eq:atomic_decomposition} is $c_{1}^{(1)}$, whereas the cost of decomposition obtained from combining \eqref{eq:right_inequality_5} with \eqref{eq:atomic_decomposition} is
	\begin{equation} \label{eq:right_inequality_7}
		c_{1}^{(1)}\sum_{\substack{l=1\\b_1^{(1)}(l)\neq 0}}^L \left|b_1^{(1)}(l)\right|.
	\end{equation}
	Normalizing the two costs by $c_{1}^{(1)}$, and if \eqref{eq:atomic_decomposition}, which by \eqref{eq:right_inequality_4} has a location $\mathbf{p}_1^{(1)}$ with less than $S_q$ paths, is optimal, then 
	\begin{equation} \label{eq:right_inequality_8}
		1 \leq \sum_{\substack{l=1\\b_1^{(1)}(l)\neq 0}}^L \left|b_1^{(1)}(l)\right|.
	\end{equation}
	
	We show next that inequality \eqref{eq:right_inequality_8} cannot be satisfied if $\|\mathbf{b}_1^{(1)}\|_2$ satisfies \eqref{eq:right_inequality_1}.	
	Define the vector function $\boldsymbol{1}(\mathbf{b}_{1}^{(1)})$ whose $l$-th entry is one if $b_{1}^{(1)}(l)\neq0$, and 0 otherwise, and denote $|\cdot|$ the element-wise absolute value. Then the right hand side of \eqref{eq:right_inequality_8} is
	\begin{equation} \label{eq:right_inequality_9}
		\sum_{\substack{l=1\\b_1^{(1)}(l)\neq 0}}^L \left|b_1^{(1)}(l)\right|  =
		\left[\boldsymbol{1}\left(\mathbf{b}_{1}^{(1)}\right)\right]^T \left|\mathbf{b}_{1}^{(1)}\right|,
	\end{equation} 
	and by the Cauchy-Schwarz inequality
	\begin{multline} \label{eq:right_inequality_10}
		\left[\boldsymbol{1}\left(\mathbf{b}_{1}^{(1)}\right)\right]^T \left|\mathbf{b}_{1}^{(1)}\right|
		\leq\\\leq
		\left\|\boldsymbol{1}\left(\mathbf{b}_{1}^{(1)}\right)\right\|_2 \left\|\mathbf{b}_{1}^{(1)}\right\|_2=
		\sqrt{\left\|\mathbf{b}_{1}^{(1)}\right\|_0} \left\|\mathbf{b}_{1}^{(1)}\right\|_2.
	\end{multline}
	However, $\|\mathbf{b}_{1}^{(1)}\|_2=u_{1}$, and by equation \eqref{eq:right_inequality_1}, $\|\mathbf{b}_{1}^{(1)}\|_2<\nicefrac{1}{\sqrt{S_1-1}}$. Moreover, by assumption \eqref{eq:right_inequality_4}, $\|\mathbf{b}_{1}^{(1)}\|_0\leq S_1-1$. Therefore, it follows
	\begin{equation}
		\sqrt{\left\|\mathbf{b}_{1}^{(1)}\right\|_0} \left\|\mathbf{b}_{1}^{(1)}\right\|_2<1,
	\end{equation}
	which combined with  \eqref{eq:right_inequality_9} and \eqref{eq:right_inequality_10} results in
	\begin{equation}
		\sum_{\substack{l=1\\b_1^{(1)}(l)\neq 0}}^L \left|b_1^{(1)}(l)\right|<1,
	\end{equation}
	which contradicts \eqref{eq:right_inequality_8}.

\section{Proof of Lemma~\ref{lem:left_inequality}} \label{appen:left_inequality}

Let \eqref{eq:atomic_decomposition} be an atomic decomposition of $\mathbf{R}$. 
Recall that parameter $K_q$ is the number of locations associated to the optimal atomic decomposition for the $q$-th source.
We aim to prove that if parameter $u_q$
\begin{equation} \label{eq:left_inequality_1st}
	\left\lVert\mathbf{b}_q^{(k)}\right\rVert_2 = u_q > \frac{1}{\sqrt{S_q}},
\end{equation}
then the optimal decomposition has $K_q\geq1$. The proof is by contradiction.
	 Let $K_1=0$, then a presumed optimal atomic decomposition \eqref{eq:atomic_decomposition} simplifies to
	\begin{equation} \label{eq:left_inequality_sufficiency_1}
		\mathbf{R}= \sum_{q=2}^{Q}
		\sum_{k=1}^{K_q}
		c_{q}^{(k)}\mathbf{L}_q\left(\mathbf{b}_q^{(k)},\mathbf{p}_q^{(k)}\right)
		+
		\sum_{q=1}^{Q}\sum_{l=1}^{L}\sum_{k=1}^{K_{ql}}
		c_{ql}^{(k)}\mathbf{N}_{ql}\left(\tau_{ql}^{(k)}\right).
	\end{equation}
	From \eqref{eq:left_inequality_sufficiency_1}, the signal at the $l$-th sensor is
	\begin{multline}
		\mathbf{r}_l = \sum_{q=2}^{Q}\sum_{\substack{k=1\\b_q^{(k)}(l)\neq0}}^{K_q}c_{q}^{(k)}
		{b}_q^{(k)}(l)
		\mathbf{s}_q\left(\tau_l\left(\mathbf{p}_q^{(k)}\right)\right)
		+\\+
		\sum_{q=1}^{Q}\sum_{k=1}^{K_{ql}}
		c_{ql}^{(k)} e^{i\phi_{ql}^{(k)}}
		\mathbf{s}_q\left(\tau_{ql}^{(k)}\right).
	\end{multline}
	Notice that the first summation begins with $q=2$, because $K_1=0$.
	By Assumption~\ref{ass:source_identifiability},
	\begin{equation} \label{eq:left_inequality_sufficiency_2}
		\left\{\tau_{1l}^{(k)}\right\}_{k=1}^{K_{1l}}
	\end{equation}
	are the true propagation delays of the paths between source $1$ and sensor $l$.
	By Assumption~\ref{ass:Sq_known}, there are $S_1$ LOS paths from source 1. Let 
	\begin{equation} \label{eq:left_inequality_sufficiency_3}
		\left\{l_1,\ldots,l_{S_1}\right\}
		\subseteq
		\left\{1,\ldots,L\right\}
	\end{equation}
	be the indexes of the destination sensors of such LOS paths, and let
	$\tau_{1l}^{(1)}$ in \eqref{eq:left_inequality_sufficiency_2} be the propagation delay corresponding to the LOS path between source 1 and sensor $l$, i.e.,
	\begin{equation} \label{eq:left_inequality_sufficiency_5}
		\tau_{1l}^{(1)}=\tau_l\left(\mathbf{p}_1\right) \text{ for } l\in\left\{l_1,\ldots,l_{S_1}\right\}.
	\end{equation}
	We show next that there exists a decomposition different than \eqref{eq:left_inequality_sufficiency_1} for which $K_1\geq1$ and whose cost is lower, thus contradicting the assumption that \eqref{eq:left_inequality_sufficiency_1} is optimal. 
	According to \eqref{eq:atoms_LOS} and \eqref{eq:atoms_NLOS}, the sum of NLOS atoms with delays $\tau_{1l}^{(1)}$ for $l\in\{l_1,\ldots,l_{S_1}\}$
	in the presumed optimal atomic decomposition \eqref{eq:left_inequality_sufficiency_1}, i.e., $\sum_{l\in\{l_1,\ldots,l_{S_1}\}} 	c_{1l}^{(1)} \mathbf{N}_{1l}(\tau_{1l}^{(1)})$, can be expressed for any parameter $c$ as
	\begin{multline} \label{eq:left_inequality_sufficiency_6}
		\sum_{l\in\{l_1,\ldots,l_{S_1}\}}
		c_{1l}^{(1)}
		\mathbf{N}_{1l}\left(\tau_{1l}^{(1)}\right)=
		\frac{\sqrt{S_{1}}c}{u_{1}}\mathbf{L}_{1}\left(\mathbf{b},\mathbf{p}_{1}\right)
		+\\+
		\sum_{l\in\{l_1,\ldots,l_{S_1}\}}
		\left(c_{1l}^{(1)}-c\right)
		\mathbf{N}_{1l}\left(\tau_{1l}^{(1)}\right),
	\end{multline}
	where $\mathbf{b}$ is
	\begin{equation}
		b(l) =
		\begin{cases}
			\frac{u_{1}}{\sqrt{S_1}}e^{i\phi_{1l}^{(1)}}
			& \text{for }l\in\left\{l_1,\cdots,l_{S_1}\right\} \\
			0
			& \text{otherwise}.
		\end{cases} \label{eq:left_inequality_sufficiency_7}
	\end{equation}
	Let $c=c_{\min}$ defined by
	\begin{equation}
		c_{\min} = \min_{l\in\{l_1,\ldots,l_{S_1}\}}
		c_{1l}^{(1)}. \label{eq:left_inequality_sufficiency_8}
	\end{equation}
	Next it is shown that the cost of the decomposition obtained by combining \eqref{eq:left_inequality_sufficiency_6}--\eqref{eq:left_inequality_sufficiency_8} with \eqref{eq:left_inequality_sufficiency_1} is lower than the cost of the decomposition \eqref{eq:left_inequality_sufficiency_1}, contradicting the assumption that \eqref{eq:left_inequality_sufficiency_1} is optimal. Notice the former decomposition includes the LOS atom $\mathbf{L}_{1}(\mathbf{b},\mathbf{p}_{1})$.
	The costs of the two decompositions differ only in the coefficients of the atoms shown in \eqref{eq:left_inequality_sufficiency_6}. Ignoring the common atoms, the cost of decomposition \eqref{eq:left_inequality_sufficiency_1} is
	\begin{equation}
		\sum_{l\in\{l_1,\ldots,l_{S_1}\}}
		c_{1l}^{(1)}
	\end{equation}
	whereas the cost of the decomposition obtained from \eqref{eq:left_inequality_sufficiency_6}--\eqref{eq:left_inequality_sufficiency_8} is
	\begin{equation}
		\frac{\sqrt{S_{1}}c_{\min}}{u_{1}}+
		\sum_{l\in\{l_1,\ldots,l_{S_1}\}}
		\left(c_{1l}^{(1)}-c_{\min}\right).
	\end{equation}
	Since \eqref{eq:left_inequality_sufficiency_1} is presumed optimal, it means it must
	satisfy
	\begin{equation}
		\sum_{l\in\{l_1,\ldots,l_{S_1}\}}
		c_{1l}^{(1)}\leq
		\frac{\sqrt{S_{1}}c_{\min}}{u_{1}}+
		\sum_{l\in\{l_1,\ldots,l_{S_1}\}}
		\left(c_{1l}^{(1)}-c_{\min}\right),
	\end{equation}
	which after simplification leads to $u_1\leq\nicefrac{1}{\sqrt{S_{1}}}$,
	contradicting \eqref{eq:left_inequality_1st}.

\bibliographystyle{IEEEtran}
\bibliography{IEEEabrv,references}

\end{document}